\newtheorem{thm}{Theorem}[section]
\newtheorem{dfn}[thm]{Definition}
\newtheorem{prop}[thm]{Proposition}
\newtheorem{lem}[thm]{Lemma}
\newtheorem{exmpl}[thm]{Example}
\newtheorem{rmrk}[thm]{Remark}
\newcommand\restr[2]{{
  \left.\kern-\nulldelimiterspace 
  #1 
  \right|_{#2} 
}}
\newcommand{\R}{\mathbb{R}}
\renewcommand{\d}{\mathrm{d}}
\let\dd\d
\newcommand{\Cinfty}{\mathscr{C}^\infty}
\newcommand{\T}{\mathrm{T}}
\newcommand{\cT}{\mathrm{T}^\ast}
\newcommand{\Id}{\mathrm{Id}}
\newcommand{\Lie}{\mathscr{L}}
\newcommand{\liedv}[1]{\Lie_{#1}}
\newcommand*{\contr}[1]{\iota_{#1}}
\newcommand{\X}{\mathfrak{X}}
\newcommand{\Reeb}{R}
\newcommand{\Rt}{R_t}
\newcommand{\Rz}{R_z}
\let\Rs\Rz
\newcommand{\parder}[2]{\frac{\partial #1}{\partial #2}}
\newcommand{\tparder}[2]{\partial #1/\partial #2}
\DeclareMathOperator{\Ima}{Im}
\DeclareMathOperator{\pr}{pr}
\DeclareMathAlphabet{\mathpzc}{OT1}{pzc}{m}{it}
\def\d{\mathrm{d}}
\let\oldemph\emph
\let\emph\textbf
\renewcommand{\jourvoldelim}{\addcomma\space}
\title{{\sffamily Hamilton--Jacobi theory and integrability for\\ autonomous and non-autonomous contact systems}}
\author{{\sffamily 
$^{a,b}$Manuel de León%
\thanks{e-mail:
   mdeleon@icmat.es \ ORCID: 0000-0002-8028-2348}\ ,\
$^a$Manuel Lainz%
\thanks{e-mail:
   manuel.lainz@icmat.es \ ORCID: 0000-0002-2368-5853}\ ,\
$^a$Asier López-Gordón%
\thanks{e-mail:
   asier.lopez@icmat.es \ ORCID: 0000-0002-9620-9647}\ ,\
$^c$Xavier Rivas%
\thanks{e-mail:
   xavier.rivas@unir.net \ ORCID: 0000-0002-4175-5157}\ ,\
}
\\[1ex]
\normalsize\itshape\sffamily
$^a$Instituto de Ciencias Matem\'aticas,
Consejo Superior de Investigaciones Cient\'ificas
\\[1ex]
\normalsize\itshape\sffamily
Calle Nicolás Cabrera 13-15, 28049, Madrid, Spain.
\\[1ex]
\normalsize\itshape\sffamily
$^b$Real Academia de Ciencias Exactas, Físicas y Naturales, Madrid, Spain.
\\[1ex]
\normalsize\itshape\sffamily
$^c$Escuela Superior de Ingenier\'{\i}a y Tecnolog\'{\i}a,
\normalsize\itshape\sffamily
Universidad Internacional de La Rioja, Logro\~no, Spain.
\\[1ex]
}
\date{{\sffamily \today}}
\begin{document}

\maketitle

\begin{abstract}\noindent
    In this paper, we study the integrability of contact Hamiltonian systems, both time-dependent and independent. In order to do so, we construct a Hamilton--Jacobi theory for these systems following two approaches, obtaining two different Hamilton--Jacobi equations. Compared to conservative Hamiltonian systems, contact Hamiltonian systems depend of one additional parameter.
    The fact of obtaining two equations reflects whether we are looking for solutions depending on this additional parameter or not. In order to illustrate the theory developed in this paper, we study three examples: the free particle with a linear external force, the freely falling particle with linear dissipation and the damped and forced harmonic oscillator.
\end{abstract}

\noindent\textbf{Keywords:}  Hamilton--Jacobi equation, contact Hamiltonian systems, integrability, complete solutions

\noindent\textbf{MSC\,2020 codes:}
37J55, 
70H20; 
70H33, 
 53D10, 
53Z05 
{\setcounter{tocdepth}{2}
\def\baselinestretch{1}
\small
\def\addvspace#1{\vskip 1pt}
\parskip 0pt plus 0.1mm
\tableofcontents
}

\newpage



\section{Introduction}

Recently there has been a renewed interest in using contact geometry \cite{Gei2008, Kholodenko2013} to describe mechanical systems. These systems, unlike symplectic Hamiltonian systems, lead to dissipated rather than conserved quantities \cite{deLeon2019a,deLeon2020a,Gaset2020a,Gaset2022}.
These systems are also relevant to describe
mechanical systems with certain types of damping \cite{deLeon2019a,Lainz2022,Bravetti2017,Bravetti2017a}, quantum mechanics \cite{Ciaglia2018,Budiyono2012}, Lie systems \cite{deLucas2022}, circuit theory \cite{Got2016}, thermodynamics \cite{Simoes2020,Bra2018}, control theory \cite{Maschke2018, vanderSchaft2018,deLeon2020b} and black holes \cite{Ghosh2019}, among many others \cite{Bravetti2020,Grabowska2022a}. 
The underlying variational principle is the so-called Herglotz principle \cite{Herglotz1930,deLeon2019b,Lopez-Gordon2022}, a generalization of the well-known Hamilton principle, which gives rise to action-dependent Lagrangian systems. These Lagrangians are becoming popular in theoretical physics \cite{Lazo2017, Lazo2018, Lazo2019}. Recently, contact mechanics have been generalized to deal with classical field theories with damping \cite{deLeon2022c,Gaset2020,Gaset2021a,Gracia2022,Rivas2022a}

Hamilton--Jacobi theory provides a remarkably powerful method to integrate
the dynamics of many Hamiltonian systems. In particular, for a completely integrable system, if one knows a complete solution of the Hamilton--Jacobi problem, the dynamics of the system can be reduced to quadratures \cite{Grillo2016, Grillo2021a, Grillo2021,Goldstein1980,Arnold1978}. Geometrically, the Hamilton--Jacobi problem consists on finding a section $\gamma$ of $\pi_Q\colon \cT Q \to Q$ which transforms integral curves of a projected vector field $X_H^\gamma$ on $Q$ into integral curves of the dynamical vector field $X_H$ on $\cT Q$ \cite{Abraham2008, Carinena2006}. This idea can be naturally extended to other vector bundles. 
As a matter of fact, it has been applied in many other different contexts, such as nonholonomic systems \cite{Carinena2010,Iglesias-Ponte2008, deLeon2010, Ohsawa2011a}, singular Lagrangian systems \cite{deLeon2012, Leok2012a, deLeon2013}, 
higher-order systems \cite{Colombo2014}, field theories \cite{Campos2015, deLeon2014a, deLeon2020, Vitagliano2012, Zatloukal2016} or systems with external forces \cite{deLeon2022, deLeon2022b}.\
 A unifying Hamilton--Jacobi theory for almost-Poisson manifolds was developed
in \cite{deLeon2014}. Hamilton--Jacobi theory has also been extended to Hamiltonian systems with non-canonical symplectic structures \cite{Martinez-Merino2006}, non-Hamiltonian systems \cite{Rashkovskiy2020}, locally conformally symplectic manifolds \cite{Esen2021}, Nambu--Poisson \cite{deLeon2017a} and Nambu--Jacobi \cite{deLeon2019} manifolds, Lie algebroids \cite{Leok2012} and implicit differential systems \cite{Esen2018,Esen2021a}. 
The applications of Hamilton--Jacobi theory include the relation between classical and quantum mechanics \cite{Budiyono2012,Carinena2009, Marmo2009}, information geometry \cite{Ciaglia2017, Ciaglia2017a}, control theory \cite{Sakamoto2002} and the study of phase transitions \cite{Kraaij2021}. Hamilton--Jacobi theory for autonomous contact Hamiltonian systems has been studied in \cite{Grabowska2022,deLeon2021d,deLeon2017}.

We have recently initiated the study of time-dependent contact Hamiltonian systems \cite{deLeon2022d, Rivas2023, Gaset2022}, and the underlying geometric structures, which we call cocontact manifolds since they are a combination of cosymplectic (the setting for studying time-dependent Hamiltonian systems) and contact structures. Such structures consist of two one-forms, $\tau$ and $\eta$, where $\tau$ is closed and $\tau \wedge \eta \wedge (\d\eta)^n$ is a volume form, in a $(2n+2)$-dimensional manifold. The local model for cocontact manifolds is the product bundle $\R \times \cT Q \times \R$ with a cocontact structure induced by the canonical symplectic structure of the cotangent bundle. In fact, in \cite{deLeon2022d} we have been able to identify that a cocontact structure gives rise to a Jacobi structure whose characteristic foliation is formed by contact leaves.

The aim of the present paper is to develop a Hamilton--Jacobi theory for time-dependent contact Hamiltonian systems. This will also allow us to construct time-dependent solutions of the Hamilton--Jacobi problem for autonomous contact systems, which, unlike time-independent solutions, cover nonzero energy levels.
We follow the line undertaken in previous papers \cite{deLeon2021d, deLeon2017}, considering sections of the canonical fibrations $\R \times \cT Q \times \R \to \R \times Q$ and $\R \times \cT Q \times \R \to \R \times Q\times  \R$, which allows us to project the Hamiltonian vector field to the base and, by comparing the values on the section, we obtain the corresponding Hamilton--Jacobi equations. This study is particularly useful since it allows us to study the symmetries, conserved quantities and integrability of the system.

In the first of the approaches, where sections of $\R \times \cT Q \times \R \to \R \times Q$ are considered, complete solutions depend on $n+1$ parameters (instead of the usual $n=\dim Q$ parameters in the classical Hamilton--Jacobi theory). We also make use of this approach to construct complete solutions, depending on $n$ parameters, for autonomous contact Hamiltonian systems. In the second approach we consider sections of $\R \times \cT Q \times \R \to \R \times Q\times \R$, and complete solutions depend of $n$ parameters (roughly speaking, the additional parameter is absorbed by the extra $\R$-component of the base). Furthermore, this second approach motivates a new definition of integrable contact Hamiltonian system. 

The paper is structured as follows. Section \ref{section_review} is devoted to review time-dependent contact Hamiltonian systems introducing the basic elements needed. In Section \ref{section_symmetries} we study symmetries and dissipated quantities in cocontact Hamiltonian systems. In Section \ref{section_HJ_action_indep} we develop the action-independent approach to the Hamilton--Jacobi problem, study complete solutions and apply our results for integrating time-independent contact Hamiltonian systems. We also present an example: a free particle with linear friction.
In Section \ref{section_HJ_action_dep} we deal with the action-dependent approach to the Hamilton--Jacobi problem, study complete solutions and introduce a new definition of integrable contact system. We also discuss two examples as applications of this approach: the freely falling particle with linear dissipation and the damped and forced harmonic oscillator.

From now on, all the manifolds and mappings are assumed to be smooth, connected and second-countable. Sum over crossed repeated indices is understood.



\section{Review on time-dependent contact systems} \label{section_review}

In this section we are going to review some fundamentals on cocontact geometry and time-dependent contact Hamiltonian systems (for more details see \cite{deLeon2022d}).

\subsection{Cocontact manifolds}

\begin{dfn}\label{dfn:cocontact-manifold}
    A \emph{cocontact structure} on a $(2n+2)$-dimensional manifold $M$ is a couple $(\tau,\eta)$, where $\tau,\eta\in\Omega^1(M)$ and $\d\tau = 0$, such that $\tau\wedge\eta\wedge(\d\eta)^n$ is a volume form on $M$. In this case, $(M,\tau,\eta)$ is called a \emph{cocontact manifold}.
\end{dfn}

Given a cocontact manifold $(M,\tau,\eta)$, the distribution $\mathcal{H} = \ker\eta$ is called the \emph{horizontal} or \emph{contact distribution}. Notice that this distribution has corank one and is maximally non-integrable.

\begin{exmpl}\label{exmpl:R-contact}\rm
	Let $(P,\eta_0)$ be a contact manifold\footnote{A \textbf{contact structure} on an odd-dimensional manifold $M$ is a one-codimensional maximally non-integrable distribution $C$ on $M$. In this case, $(M,C)$ is a \textbf{contact manifold}. A \textbf{contact form} on $M$ is a one-form $\eta\in\Omega^1(M)$ such that $\ker\eta$ becomes a contact structure on $M$. In this case, $(M,\eta)$ is called a \textbf{co-oriented contact manifold} \cite{Gei2008}. However, since we are only interested in local aspects of contact manifolds, we will consider that all our contact manifolds are co-oriented.} and consider the product manifold $M = \R\times P$. Denoting by $\d t$ the pullback to $M$ of the volume form in $\R$ and by $\eta$ the pullback of $\eta_0$ to $M$, we have that $(\d t, \eta)$ is a cocontact structure on~$M$.
\end{exmpl}

\begin{exmpl}\rm
	Let $(P,\tau,-\d\theta)$ be an exact cosymplectic manifold \cite{Cantrijn1992} and consider the product manifold $M = P\times\R$. Denoting by $z$ the coordinate in $\R$ we define the one-form $\eta = \d z - \theta$. Then, $(\tau, \eta)$ is a cocontact structure on $M = P\times\R$.
\end{exmpl}

\begin{exmpl}[Canonical cocontact manifold]\label{ex:canonical-cocontact-manifold}\rm
	Let $Q$ be an $n$-dimensional smooth manifold with local coordinates $(q^i)$ and consider its cotangent bundle $\cT Q$ with induced natural coordinates $(q^i, p_i)$. Consider the product manifolds $\R\times\cT Q$ with coordinates $(t, q^i, p_i)$, $\cT Q\times\R$ with coordinates $(q^i, p_i, z)$ and $\R\times\cT Q\times\R$ with coordinates $(t, q^i, p_i, z)$ and the canonical projections
	\begin{center}
		\begin{tikzcd}
			& \R\times\cT Q\times\R \arrow[dl, swap, "\rho_1"] \arrow[dr, "\rho_2"] \arrow[dd, "\pi"] & \\
			\R\times\cT Q \arrow[dr, swap, "\pi_2"] & & \cT Q\times\R \arrow[dl, "\pi_1"] \\
			& \cT Q &
		\end{tikzcd}
	\end{center}
	Let $\theta_0\in\Omega^1(\cT Q)$ be the Liouville one-form of the cotangent bundle, which has local expression $\theta_0 = p_i\d q^i$. Then, $(\d t, \theta_2)$, {where $\theta_2 = \pi_2^\ast\theta_0$}, is a cosymplectic structure in $\R\times\cT Q$. On the other hand, if $\theta_1 = \pi_1^\ast\theta_0$, we have that $\eta_1 = \d z - \theta_1$ is a contact form in $\cT Q\times\R$.
	
	Finally, consider the 1-form $\theta = \rho_1^\ast\theta_2 = \rho_2^\ast\theta_1 = \pi^\ast\theta_0\in\Omega^1(\R\times\cT Q\times\R)$ and let $\eta = \d z - \theta$. Then, $(\d t, \eta)$ is a cocontact structure in $\R\times\cT Q\times\R$. The local expression of the one-form $\eta$ is
	$$ \eta = \d z - p_i\d q^i\,. $$
\end{exmpl}

Given a cocontact manifold $(M, \tau, \eta)$, we have the \emph{flat isomorphism}.
\begin{equation}
    \flat: v\in\T M \longmapsto (\contr{v}\tau)\tau + \contr{v}\d\eta + \left(\contr{v}\eta\right)\eta\in\cT M\, .
\end{equation}
This isomorphism can be trivially extended to an isomorphism of $\Cinfty(M)$-modules $\flat: \X(M)\to\Omega^1(M)$. The inverse of the flat isomorphism is denoted by $\sharp = \flat^{-1}\colon\Omega^1(M)\to\X(M)$ and called the \emph{sharp isomorphism}.

Moreover, we have the following results, whose proofs can be found in \cite{deLeon2022d}.
\begin{prop}\label{prop:Reeb-vector-fields}
    On every cocontact manifold $(M, \tau, \eta)$ there exist two distinguished vector fields $\Rt$, $\Rz$ on $M$ such that
    \begin{gather*}
        \contr{\Rt}\tau = 1\,,\qquad \contr{\Rt}\eta = 0\,,\qquad \contr{\Rt}\d\eta = 0\,,\\
        \contr{\Rz}\tau = 0\,,\qquad \contr{\Rz}\eta = 1\,,\qquad \contr{\Rz}\d\eta = 0\,,
    \end{gather*}
    or, equivalently, $\Rt = \flat^{-1}(\tau)$ and $\Rz = \flat^{-1}(\eta)$. These vector fields $\Rt$ and $\Rz$ are called \emph{time and contact Reeb vector fields} respectively.
\end{prop}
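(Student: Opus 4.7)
The plan is to prove uniqueness, existence, and the identification with $\flat^{-1}$ separately; the arguments for $\Rt$ and $\Rz$ are structurally identical, so I would treat them in parallel.

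Uniqueness comes from a single contraction with the volume form. If $v$ satisfies $\contr{v}\tau = \contr{v}\eta = 0 = \contr{v}\d\eta$, then expanding the antiderivation,
\[
\contr{v}\bigl(\tau\wedge\eta\wedge(\d\eta)^n\bigr) = (\contr{v}\tau)\,\eta\wedge(\d\eta)^n - (\contr{v}\eta)\,\tau\wedge(\d\eta)^n + n\,\tau\wedge\eta\wedge(\contr{v}\d\eta)\wedge(\d\eta)^{n-1},
\]
and all three terms vanish. Since $\tau\wedge\eta\wedge(\d\eta)^n$ is a volume form, its contraction with $v$ cannot vanish unless $v=0$. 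Applied to the difference of two candidate Reeb vector fields, this gives uniqueness of $\Rt$ and $\Rz$. The same computation also makes $\flat$ injective: if $\flat(v)=0$, contracting with $v$ yields $(\contr{v}\tau)^2+(\contr{v}\eta)^2=0$, so $\contr{v}\tau=\contr{v}\eta=0$ and $\contr{v}\d\eta=0$, whence $v=0$; thus $\flat$ is a fibrewise isomorphism.

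For existence I would invoke the canonical local model of Example~\ref{ex:canonical-cocontact-manifold} together with a Darboux-type theorem for cocontact manifolds (established in \cite{deLeon2022d}): in local coordinates $(t,q^i,p_i,z)$ with $\tau=\d t$, $\eta=\d z - p_i\d q^i$ and $\d\eta=\d q^i\wedge\d p_i$, the vector fields $\partial_t$ and $\partial_z$ visibly satisfy the three contraction conditions, and by uniqueness the local constructions glue into smooth global vector fields $\Rt$ and $\Rz$. The equivalence with $\flat^{-1}$ is then automatic: substituting $\Rt$ into the definition of $\flat$ gives $\flat(\Rt) = 1\cdot\tau + 0 + 0 = \tau$, so $\Rt=\flat^{-1}(\tau)$, and similarly $\Rz=\flat^{-1}(\eta)$.

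The main obstacle is the existence step. A coordinate-free construction would require showing that $\ker\d\eta$ is a rank-two distribution transverse to $\ker\tau\cap\ker\eta$, which is not immediate from the bare cocontact axioms as written; relying on the Darboux normal form sidesteps this at the cost of depending on that structural theorem, but the remaining verifications---the volume-form contraction and the coordinate check---are routine.
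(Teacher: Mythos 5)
First, note that the paper itself does not prove this proposition: it is imported from \cite{deLeon2022d} (``whose proofs can be found in\ldots''), so there is no internal argument to compare yours against, and your attempt has to stand on its own. Your uniqueness step and the injectivity of $\flat$ are correct and are exactly the right computations: the expansion of $\contr{v}\bigl(\tau\wedge\eta\wedge(\d\eta)^n\bigr)$ shows that no nonzero $v$ can annihilate $\tau$, $\eta$ and $\d\eta$ simultaneously, and contracting $\flat(v)=0$ with $v$ reduces to that case.

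The existence step, however, has a genuine gap, and the obstacle you flag at the end is not a technical inconvenience but the heart of the matter. Two problems. (i) Invoking the cocontact Darboux theorem is circular in the logical order of this development: Theorem \ref{thm:Darboux-cocontact} is stated \emph{after} Proposition \ref{prop:Reeb-vector-fields}, its statement already refers to $\Rt$ and $\Rz$, and Darboux-type normal forms for contact and cosymplectic structures are standardly proved by first constructing the Reeb fields and straightening them. (ii) More seriously, existence does not follow from Definition \ref{dfn:cocontact-manifold} as literally written. On $\R^4$ with coordinates $(t,q,p,z)$ take $\tau=\d t$ and $\eta=\d z-p\,\d q-z\,\d t$; then $\d\tau=0$ and $\tau\wedge\eta\wedge\d\eta=\d t\wedge\d z\wedge\d q\wedge\d p$ is a volume form, yet $\d\eta=\d q\wedge\d p+\d t\wedge\d z$ is nondegenerate, so no nonzero vector annihilates $\d\eta$ and neither Reeb field can exist. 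The missing ingredient is the (implicit) requirement $\rk\d\eta=2n$, equivalently $(\d\eta)^{n+1}=0$. Once that is assumed, your own contraction identity finishes the job without Darboux: $\ker\d\eta$ is a rank-two subbundle, and the identity shows that the bundle map $v\mapsto(\contr{v}\tau,\contr{v}\eta)$ restricted to $\ker\d\eta$ has trivial kernel, hence is a fibrewise isomorphism onto $\R^2$; define $\Rt$ and $\Rz$ as the preimages of $(1,0)$ and $(0,1)$, smoothness being automatic because $\ker\d\eta$ has constant rank. The identification $\Rt=\flat^{-1}(\tau)$, $\Rz=\flat^{-1}(\eta)$ then follows by substitution, as you say. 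So: either add the rank hypothesis explicitly and argue as above, or accept that the proposition cannot be proved from the stated axioms alone.
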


\begin{thm}[Cocontact Darboux theorem]\label{thm:Darboux-cocontact}
	Given a cocontact manifold $(M,\tau,\eta)$, around every point $x\in M$ there exist local coordinates $(t, q^i, p_i, z)$ such that
	$$ \tau = \d t\ ,\quad \eta = \d z - p_i\d q^i\,. $$
	These coordinates are called \emph{canonical} or \emph{Darboux} coordinates. In addition, in Darboux coordinates, the Reeb vector fields read
	$$ \Rt = \parder{}{t}\ ,\quad \Rz = \parder{}{z}\,. $$
\end{thm}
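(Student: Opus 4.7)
The plan is to mimic the classical Darboux proofs by first straightening both Reeb vector fields into coordinate vector fields, and then reducing the problem to the standard symplectic Darboux theorem on a $2n$-dimensional transversal. The preliminary fact that makes this work is that $\Rt$ and $\Rz$ commute, so their joint flow can be used to produce two of the Darboux coordinates, with the remaining ones coming from a symplectic normal form.

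\textbf{Step 1 (Commutativity of the Reeb vector fields).} Using Cartan's formula and Proposition~\ref{prop:Reeb-vector-fields}, I first verify that $\liedv{\Rt}\tau=\liedv{\Rt}\eta=\liedv{\Rz}\tau=\liedv{\Rz}\eta=0$; for instance, $\liedv{\Rt}\tau=\d\contr{\Rt}\tau+\contr{\Rt}\d\tau=\d(1)+0=0$, and the other three are analogous. Then for $[\Rt,\Rz]$ one computes $\contr{[\Rt,\Rz]}\tau=\liedv{\Rt}\contr{\Rz}\tau-\contr{\Rz}\liedv{\Rt}\tau=0$ and, in the same way, $\contr{[\Rt,\Rz]}\eta=0$ and $\contr{[\Rt,\Rz]}\d\eta=0$. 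By the non-degeneracy of the flat isomorphism $\flat$, this forces $[\Rt,\Rz]=0$.

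\textbf{Step 2 (Straightening and reduction of $\tau$).} Since $\Rt$ and $\Rz$ commute and are pointwise linearly independent (they pair differently against $\tau$ and $\eta$), the simultaneous flow-box theorem yields coordinates $(\tilde t,\tilde z,y^1,\dots,y^{2n})$ with $\Rt=\partial/\partial\tilde t$ and $\Rz=\partial/\partial\tilde z$. Combining the invariance and contraction relations of Step~1 with these coordinate identifications forces $\tau=\d\tilde t+\alpha(y)$ and $\eta=\d\tilde z+\beta(y)$, with $\alpha,\beta$ depending only on the $y$-variables. Closedness $\d\tau=0$ gives $\d\alpha=0$, so by the Poincaré lemma $\alpha=\d f(y)$ locally; setting $t=\tilde t+f(y)$ yields $\tau=\d t$ while preserving $\Rt=\partial/\partial t$ and $\Rz=\partial/\partial\tilde z$.

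\textbf{Step 3 (Symplectic Darboux on the transverse slice).} In these coordinates,
\begin{equation}
\tau\wedge\eta\wedge(\d\eta)^n = \d t\wedge\d\tilde z\wedge(\d\beta)^n,
\end{equation}
because the summand $\d t\wedge\beta\wedge(\d\beta)^n$ is a $(2n+1)$-form in only $2n$ independent $y$-variables and therefore vanishes. The volume hypothesis then says that $\d\beta$ is a non-degenerate closed $2$-form on the $y$-slice, i.e.\ a symplectic form. Invoking the classical symplectic Darboux theorem, I obtain coordinates $(q^i,p_i)$ on the slice with $\d\beta=-\d p_i\wedge\d q^i$, so $\beta=-p_i\,\d q^i+\d g(q,p)$ for some primitive correction $g$. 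Replacing $\tilde z$ by $z=\tilde z+g$ produces $\eta=\d z-p_i\,\d q^i$, while leaving $\Rz=\partial/\partial z$ and not disturbing $\tau=\d t$ or $\Rt=\partial/\partial t$. The main technical obstacle is Step~1, the commutativity of the Reeb vector fields, since the remaining steps are clean applications of the flow-box theorem and classical Darboux theorem once that commutation is in hand.
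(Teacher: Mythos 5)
Your proof is correct. Note that the paper itself does not prove Theorem~\ref{thm:Darboux-cocontact}; it defers to the reference \cite{deLeon2022d}, so there is no in-text argument to compare against. That said, your strategy --- establish $[\Rt,\Rz]=0$ via Cartan's formula and the injectivity of $\flat$, straighten both Reeb fields simultaneously, use $\d\tau=0$ to normalize $\tau=\d t$, observe that the volume condition forces $\d\beta$ to be symplectic on the $2n$-dimensional transversal, and finish with the classical symplectic Darboux theorem plus a shift of $\tilde z$ by a primitive --- is the standard route for Darboux-type theorems in the contact and cosymplectic settings and is essentially the argument given in the cited reference. All the individual steps check out: the invariance relations $\liedv{\Rt}\tau=\liedv{\Rt}\eta=\liedv{\Rz}\tau=\liedv{\Rz}\eta=0$ do force $\alpha$ and $\beta$ to depend only on the transverse variables, the term $\d t\wedge\beta\wedge(\d\beta)^n$ does vanish for degree reasons, and the final coordinate changes $t=\tilde t+f(y)$ and $z=\tilde z+g(q,p)$ leave the coordinate vector fields $\partial/\partial t$ and $\partial/\partial z$ (hence the identification with $\Rt$ and $\Rz$) intact.
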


\begin{prop}\label{prop_Jacobi_mfold}
    Let $(M,\tau,\eta)$ be a cocontact manifold. Then, $(M,\Lambda,E)$ is a Jacobi manifold, where
    $$ \Lambda(\alpha,\beta) = -\d\eta(\sharp\alpha,\sharp\beta)\ ,\quad E = -\Rz\,. $$
\end{prop}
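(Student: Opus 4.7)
The plan is to verify the two defining identities of a Jacobi structure,
$$[\Lambda,\Lambda] = 2\,E\wedge\Lambda \quad\text{and}\quad [E,\Lambda] = 0,$$
where $[\cdot,\cdot]$ denotes the Schouten--Nijenhuis bracket. I would work throughout in the Darboux coordinates $(t,q^i,p_i,z)$ provided by Theorem \ref{thm:Darboux-cocontact}, in which $\tau = \d t$, $\eta = \d z - p_i\d q^i$, and $\Reeb_t = \partial/\partial t$, $\Reeb_z = \partial/\partial z$.

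The first step is to compute $\sharp = \flat^{-1}$ on the coordinate one-forms. Applying $\flat$ to the basis vector fields and inverting, I expect
$$\sharp(\d t) = \frac{\partial}{\partial t},\quad \sharp(\d q^i) = -\frac{\partial}{\partial p_i},\quad \sharp(\d p_i) = \frac{\partial}{\partial q^i} + p_i\frac{\partial}{\partial z},\quad \sharp(\d z) = \frac{\partial}{\partial z} - p_j\frac{\partial}{\partial p_j}.$$
Substituting into $\Lambda(\alpha,\beta) = -\d\eta(\sharp\alpha,\sharp\beta)$ with $\d\eta = \d q^i\wedge\d p_i$ then yields the coordinate expressions
$$\Lambda = \frac{\partial}{\partial p_i}\wedge\frac{\partial}{\partial q^i} + p_i\,\frac{\partial}{\partial p_i}\wedge\frac{\partial}{\partial z},\qquad E = -\frac{\partial}{\partial z}.$$
Two features are crucial for what follows: neither $\Lambda$ nor $E$ has any $\partial/\partial t$ component, and no coefficient depends on $t$ or on $z$.

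These observations give the conclusion in two ways. Conceptually, $(\Lambda,E)$ is precisely the canonical contact Jacobi structure on the factor $\cT Q\times\R$ (Example \ref{exmpl:R-contact}) extended trivially in the inert $t$-direction, so the claim reduces to the well-known contact case. Alternatively, one verifies the identities directly. The identity $[E,\Lambda] = -\Lie_{\partial/\partial z}\Lambda = 0$ is immediate from the $z$-independence of the coefficients of $\Lambda$. For $[\Lambda,\Lambda] = 2\,E\wedge\Lambda$, expand using the Leibniz property of the Schouten--Nijenhuis bracket on decomposable bivectors: the self-brackets of the Poisson part $\partial_{p_i}\wedge\partial_{q^i}$ and of the dissipative part $p_i\,\partial_{p_i}\wedge\partial_{z}$ each vanish (the latter because $\partial_z\wedge\partial_z = 0$), while the cross terms produce $\pm 2\,\partial_{q^i}\wedge\partial_{p_i}\wedge\partial_{z}$, which matches $2\,E\wedge\Lambda$ after reordering the wedge factors. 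The main obstacle is the sign bookkeeping in the Schouten--Nijenhuis computation; otherwise the argument is entirely mechanical.
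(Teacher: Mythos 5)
The paper itself does not prove this proposition---it is imported from \cite{deLeon2022d} with the remark that the proofs ``can be found'' there---so there is no in-text argument to compare against. Your strategy (verify $[\Lambda,\Lambda]=2E\wedge\Lambda$ and $[E,\Lambda]=0$ in Darboux coordinates, or equivalently observe that $\Lambda$ and $E$ have no $\partial/\partial t$ component and no $t$-dependence and so reduce to the known contact Jacobi structure on the factor $\cT Q\times\R$) is the standard one, your computation of $\sharp$ on the coordinate one-forms is correct, and the verification of $[E,\Lambda]=0$ is indeed immediate.

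The gap is exactly where you flag it: the sign of the cross term. The compatibility condition $[\Lambda,\Lambda]=2E\wedge\Lambda$ cannot be checked ``up to sign'', because replacing $\Lambda$ by $-\Lambda$ leaves $[\Lambda,\Lambda]$ unchanged but flips $E\wedge\Lambda$; the entire content of the identity is the relative sign of $\Lambda$ and $E$. If you push the Schouten computation through with the bivector you actually derived, $\Lambda=\partial_{p_i}\wedge\partial_{q^i}+p_i\,\partial_{p_i}\wedge\partial_{z}$, you find $[\Lambda,\Lambda]=2\,\partial_{p_i}\wedge\partial_{q^i}\wedge\partial_{z}$ whereas $2E\wedge\Lambda=-2\,\partial_{p_i}\wedge\partial_{q^i}\wedge\partial_{z}$ for $E=-\partial/\partial z$, so the identity fails by a sign. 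Note that your $\Lambda$ is the \emph{negative} of the coordinate expression displayed in the paper, $\Lambda=\partial_{q^i}\wedge\partial_{p_i}-p_i\,\partial_{p_i}\wedge\partial_{z}$, and it is the latter that satisfies the Jacobi compatibility with $E=-\Rz$ and reproduces the stated brackets $\{q^i,p_j\}=\delta^i_j$ and $\{p_i,z\}=-2p_i$ (yours gives $\{q^i,p_j\}=-\delta^i_j$ and $\{p_i,z\}=0$). The discrepancy traces back to an internal sign inconsistency in the paper between the formula $\Lambda(\alpha,\beta)=-\d\eta(\sharp\alpha,\sharp\beta)$ (together with the given $\flat$) and the displayed local expression for $\Lambda$, so landing on the ``wrong'' representative is not entirely your fault; but a complete proof must fix a convention, pin the sign down, and verify the identity for that $\Lambda$. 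The ``$\pm$'' left in your last step is precisely the point at which the argument is not yet a proof.
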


The bivector $\Lambda$ induces a $\Cinfty(M)$-module morphism $\hat\Lambda\colon\Omega^1(M)\to\X(M)$ given by
\begin{equation}
    \hat\Lambda(\alpha) = \Lambda(\alpha,\cdot) = \sharp\alpha - \alpha(\Rz)\Rz - \alpha(\Rt)\Rt\,. 
    \label{Lambda_hat_cocontact}
\end{equation}
It can be seen that $\ker\hat\Lambda = \langle\tau,\eta\rangle$. The morphism $\hat \Lambda$ is also denoted by $\sharp_\Lambda$ in the literature \cite{deLeon2019a,deLeon2021a}.

Taking Darboux coordinates $(t,q^i,p_i,z)$, the bivector $\Lambda$ has local expression
$$ \Lambda = \parder{}{q^i}\wedge\parder{}{p_i} - p_i\parder{}{p_i}\wedge\parder{}{z}\,, $$
and the Jacobi bracket reads
$$ \{f,g\} = \parder{f}{q^i}\parder{g}{p_i} - \parder{g}{q^i}\parder{f}{p_i} - \left( \parder{f}{p_i}\parder{g}{z} - \parder{g}{p_i}\parder{f}{z} \right) - f\parder{g}{z} + g\parder{f}{z}\,. $$
In particular, one has
$$ \{q^i,q^j\} = \{p_i, p_j\} = 0\,,\qquad \{q^i, p_j\} = \delta_j^i\,,\qquad \{q^i, z\} = -q^i\,,\qquad \{p_i,z\} = -2p_i\,. $$

\subsection{Cocontact Hamiltonian systems}

\begin{dfn}
	A \emph{cocontact Hamiltonian system} is family $(M,\tau,\eta,H)$ where $(\tau,\eta)$ is a cocontact structure on $M$ and $H:M\to\R$ is a Hamiltonian function. The \emph{cocontact Hamilton equations} for a curve $\psi\colon I\subset \R\to M$ are
	\begin{equation}\label{eq:Ham-eq-cocontact-sections}
		\begin{dcases}
			\contr{\psi'}\d \eta = \Big(\d H-(\Lie_{R_s}H)\eta-(\Lie_{\Rt}H)\tau\Big)\circ\psi\,,
			\\
			\contr{\psi'}\eta = -H\circ\psi\,,
			\\
			\contr{\psi'}\tau = 1\,,
		\end{dcases}
	\end{equation}
	where $\psi'\colon I\subset\R\to\T M$ is the canonical lift of $\psi$ to the tangent bundle $\T M$. The \emph{cocontact Hamiltonian equations} for a vector field $X_H\in\X(M)$ are:
	\begin{equation}\label{eq:Ham-eq-cocontact-vectorfields}
		\begin{dcases}
			   \contr{X_H}\d \eta = \d H-(\Lie_{R_s}H)\eta-(\Lie_{\Rt}H)\tau\,,
			\\
			\contr{X_H}\eta = -H\,,
			\\
			\contr{X_H}\tau = 1\,,
		\end{dcases}
	\end{equation}
	or equivalently, $\flat(X_H)=\d H-\left(\Lie_{R_s}H+H\right)\eta+\left(1-\Lie_{\Rt}H\right)\tau$. The unique solution to these equations is called the \emph{cocontact Hamiltonian vector field}.
\end{dfn}

Given a curve $\psi$ with local expression $\psi(r)=(f(r),q^i(r),p_i(r),z(r))$, the third equation in \eqref{eq:Ham-eq-cocontact-sections} imposes that $f(r)=r + c$, where $c$ is some constant, thus we will denote $r\equiv t$, while the other equations read:
\begin{equation}\label{eq:Hamilton-cocontact}
   \begin{dcases}
		\dot q^i =\frac{\partial H}{\partial p_i}\,,
		\\
	   \dot p_i = -\left(\frac{\partial H}{\partial q^i}+p_i\frac{\partial H}{\partial z}\right)\,,
		\\
	   \dot z = p_ i\frac{\partial H}{\partial p_i}-H\,.		
	\end{dcases}	  
\end{equation}
On the other hand, the local expression of the cocontact Hamiltonian vector field is
$$ X_H = \parder{}{t} + \parder{H}{p_i}\parder{}{q^i} - \left(\parder{H}{q^i} + p_i\parder{H}{z}\right)\parder{}{p_i} + \left(p_i\parder{H}{p_i} - H\right)\parder{}{z}\,. $$

The integral curves of the cocontact Hamiltonian vector field satisfy the following variational principle~\cite{Liu2018}, which is a Hamiltonian version of the Herglotz principle~\cite{Herglotz1930}.
\begin{thm}[Hamiltonian formulation of the Herglotz principle]
Given a Hamiltonian $H: \R \times \cT Q \times \R \to \R$, a curve $c=(\Id_\R,q,p,z):[0,T] \to \cT Q \times \R$ is an integral curve of the Hamiltonian vector field $X_H$ if and only if it is a critical point of the action map
\begin{equation}
    \mathcal{A}(c) = \int_0^T \big(p(t) \dot{q}(t) - H(t,q(t),p(t),z(t)) \big)\; \dd t
    \label{action_Hamiltonian}
\end{equation}
among all curves satisfying $c(0)= c_0$, $c(T) = c_T$ and $\dot{z} = p(t) \dot{q}(t) - H(t,q(t),p(t),z(t))$.
\end{thm}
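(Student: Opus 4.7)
The plan is a Herglotz-style constrained variational argument. Consider a one-parameter family of admissible curves $c_\epsilon=(t,q_\epsilon,p_\epsilon,z_\epsilon)$ with $c_0=c$, all satisfying the differential constraint $\dot z_\epsilon=p_\epsilon\dot q_\epsilon-H(t,q_\epsilon,p_\epsilon,z_\epsilon)$ and the prescribed endpoint data; write $\delta q^i,\delta p_i,\delta z$ for the infinitesimal variations at $\epsilon=0$, so that $\delta q^i(0)=\delta q^i(T)=0$ and $\delta z(0)=0$.

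First I would differentiate the constraint in $\epsilon$ at $\epsilon=0$ to obtain the linear first-order ODE
\[
\delta\dot z+\frac{\partial H}{\partial z}\,\delta z \;=\; \dot q^i\,\delta p_i+p_i\,\delta\dot q^i-\frac{\partial H}{\partial q^i}\,\delta q^i-\frac{\partial H}{\partial p_i}\,\delta p_i,\qquad \delta z(0)=0.
\]
Computing $\delta\mathcal{A}$ directly and substituting the right-hand side of this ODE into its integrand makes the $\partial_zH\,\delta z$ terms cancel, yielding the striking identity $\delta\mathcal{A}=\int_0^T\delta\dot z\,\d t=\delta z(T)$. Hence the critical-point condition on $\mathcal{A}$ is equivalent to $\delta z(T)=0$ for every admissible variation.

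The core step is to convert this condition into local equations for $c$. I would introduce the integrating factor $\mu(t)=\exp\!\int_0^t(\partial H/\partial z)\circ c\,\d s$, so that the linear ODE for $\delta z$ becomes $\d(\mu\,\delta z)/\d t=\mu\cdot(\text{RHS})$ and integrates to
\[
\mu(T)\,\delta z(T) \;=\; \int_0^T \mu\,\Big(\dot q^i\,\delta p_i+p_i\,\delta\dot q^i-\tfrac{\partial H}{\partial q^i}\,\delta q^i-\tfrac{\partial H}{\partial p_i}\,\delta p_i\Big)\d t.
\]
An integration by parts of the $\mu\,p_i\,\delta\dot q^i$ term, using $\dot\mu=\mu\,(\partial H/\partial z)\circ c$ and the boundary data on $\delta q^i$, then reorganises everything into
\[
\mu(T)\,\delta\mathcal{A} \;=\; \int_0^T \mu\,\Big[\big(\dot q^i-\tfrac{\partial H}{\partial p_i}\big)\delta p_i-\big(\dot p_i+p_i\,\tfrac{\partial H}{\partial z}+\tfrac{\partial H}{\partial q^i}\big)\delta q^i\Big]\d t.
\]
Since $\mu>0$ and the variations $\delta q^i,\delta p_i$ are independent (with $\delta q^i$ vanishing at the endpoints), the fundamental lemma of the calculus of variations forces the first two equations of \eqref{eq:Hamilton-cocontact}; the third is the Herglotz constraint itself. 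Reading the chain of identities backwards delivers the converse implication.

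The main subtlety is the non-local coupling of $\delta z$ to $(\delta q^i,\delta p_i)$ induced by the differential constraint; without it one would recover the standard cotangent Hamilton equations, not the dissipative cocontact ones. The integrating factor $\mu$ is precisely the device that converts the accumulated $\partial_zH$ dependence of $\delta z$ into the pointwise correction $p_i\,\partial_zH$ in the $\dot p_i$-equation, which is the signature of cocontact mechanics.
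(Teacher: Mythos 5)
Your proof is correct. Note first that the paper itself does not prove this theorem: it is stated as a quoted result with a citation (to Liu et al.), so there is no in-paper argument to compare against. Your derivation is a valid, self-contained proof along the standard Herglotz lines: the identity $\delta\mathcal{A}=\delta z(T)$, the integrating factor $\mu(t)=\exp\int_0^t(\partial H/\partial z)\circ c\,\d s$ turning the non-local constraint on $\delta z$ into a pointwise weighted Euler--Lagrange computation, and the observation that the third cocontact Hamilton equation $\dot z=p_i\,\partial H/\partial p_i-H$ is recovered from the constraint once $\dot q^i=\partial H/\partial p_i$ holds, are all exactly the right ingredients, and the converse direction does follow by running the identities backwards. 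Your remark that admissible variations are parametrized by free $(\delta q^i,\delta p_i)$ (with $\delta q^i$ vanishing at the endpoints) while $\delta z$ is slaved to them through the linear ODE is the key point that justifies applying the fundamental lemma, and you state it correctly.

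One small caveat worth flagging: as literally written, the theorem fixes $c(T)=c_T$ in full, which would pin down $z(T)$ and make $\mathcal{A}(c)=z(T)-z(0)$ constant on the admissible class, trivializing the criticality condition. Your proof implicitly (and correctly) takes the standard reading of the Herglotz principle, fixing $q(0),q(T)$ and $z(0)$ but leaving $z(T)$ free, which is what makes the condition $\delta z(T)=0$ for all admissible variations non-vacuous. It would strengthen your write-up to state this reading explicitly at the outset rather than leaving it implicit in the choice $\delta z(0)=0$, $\delta z(T)$ unconstrained.
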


\section{Symmetries and dissipated quantities in cocontact systems} \label{section_symmetries}
There are several notions of symmetries in contact mechanics depending on the structures they preserve \cite{deLeon2020a,Gaset2020a}. However, in the present paper we will restrict ourselves to what we call generalized dynamical symmetries (see \cite{Gaset2022} for other notions of symmetry). In some cases we will restrict ourselves to the case of cocontact manifolds of the form $M = \R\times N$ where $N$ is a contact manifold (see Example \ref{exmpl:R-contact}). In this case, the natural projection $\R\times N\to\R$ defines a global canonical coordinate $t$.

\begin{dfn}
    Let $(M,\tau,\eta,H)$ be a cocontact Hamiltonian system and let $X_H$ be its cocontact Hamiltonian vector field.
    \begin{itemize}
        \item If $M = \R\times N$ with $N$ a contact manifold, a \textbf{generalized dynamical symmetry} is a diffeomorphism $\Phi\colon M\to M$ such that $\eta(\Phi_\ast X_H) = \eta(X_H)$ and $\Phi^\ast t = t$.
        \item An \textbf{infinitesimal generalized dynamical symmetry} is a vector field $Y\in\X(M)$ such that $\eta([Y,X_H]) = 0$ and $\contr{Y}\tau = 0$. In particular, if $M = \R\times N$ with $N$ a contact manifold, the flow of $Y$ is made of generalized dynamical symmetries.
    \end{itemize}
\end{dfn}


\begin{dfn}
    Let $(M,\tau,\eta,H)$ be a cocontact Hamiltonian system. A \emph{dissipated quantity} is a function $f\in\Cinfty(M)$ such that
    $$ X_H f = -(\Rz H) f\,. $$
\end{dfn}

It is worth pointing out that, unlike in the contact case, the Hamiltonian function is not, in general, a dissipated quantity. Indeed, using that
$$ X_H H = -(\Rz H)H + \Rt H\,, $$
it is clear that $H$ is a dissipated quantity if and only if it is time-independent, i.e. $R_t H = 0$. This resembles the cosymplectic case, where the Hamiltonian function is conserved if, and only if, it is time-independent (see \cite{Cantrijn1992}).

\begin{prop}
    A function $f\in \Cinfty(M)$ is a dissipated quantity if and only if $\{f, H\} = \Rt(f)$, where $\{\cdot,\cdot\}$ is the Jacobi bracket associated to the cocontact structure $(\tau,\eta)$.
\end{prop}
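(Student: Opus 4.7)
The plan is to verify the equivalence by a direct calculation in Darboux coordinates $(t, q^i, p_i, z)$ provided by Theorem~\ref{thm:Darboux-cocontact}. In such coordinates everything is already explicit: the cocontact Hamiltonian vector field $X_H$ was computed in Section~\ref{section_review}, the Reeb vector fields are $R_t = \partial/\partial t$ and $R_z = \partial/\partial z$, and the Jacobi bracket coming from the Jacobi structure $(\Lambda, E)$, $E = -R_z$, of Proposition~\ref{prop_Jacobi_mfold} has the local expression displayed right after that proposition.

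First I would expand $X_H f + (R_z H) f$ using the explicit form of $X_H$, isolating the $\partial_t f$ summand as $R_t f$. The remainder is a bilinear combination of the first partial derivatives of $f$ and $H$, plus the terms $f\,\partial_z H$ and $-H\,\partial_z f$. Next I would expand $\{f, H\} - R_t f$ from the local bracket formula; the same bilinear combination appears, while the remaining terms (involving $R_t f$, $f\,\partial_z H$ and $H\,\partial_z f$) arise with signs opposite to their counterparts on the $X_H$ side. A term-by-term comparison then yields the key identity $\{f, H\} - R_t f = -\bigl(X_H f + (R_z H) f\bigr)$, from which the stated equivalence is immediate.

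A coordinate-free version of the same argument starts from $\flat(X_H) = dH - (R_z H + H)\eta + (1 - R_t H)\tau$. Applying $\sharp$ gives $\sharp(dH) = X_H + (R_z H + H) R_z - (1 - R_t H) R_t$, and combining this with the formula $\hat\Lambda(\alpha) = \sharp\alpha - \alpha(R_z)R_z - \alpha(R_t)R_t$ recalled in Section~\ref{section_review} yields a closed expression for $\hat\Lambda(dH)$ in terms of $X_H$, $H R_z$ and $R_t$. Plugging this into $\{f, H\} = \Lambda(df, dH) + fE(H) - HE(f)$ and using the antisymmetry of $\Lambda$ together with the convention $E = -R_z$ reproduces the identity globally on $M$.

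The main obstacle is the careful bookkeeping of signs coming from (i) the antisymmetry of $\Lambda$, (ii) the convention $E = -R_z$ in the Jacobi structure, and (iii) the simultaneous appearance of both Reeb vector fields $R_t$ and $R_z$ in the defining formula for $\flat(X_H)$; no new structural idea is needed beyond the explicit descriptions of $X_H$ and of the Jacobi bivector in Darboux coordinates.
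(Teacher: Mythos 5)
Your coordinate-free argument (the second paragraph) is essentially the paper's own proof: the paper likewise writes $\sharp\,\d f = X_f + \left(R_z(f) + f\right) R_z - \left(1 - R_t(f)\right) R_t$, uses $\iota_{X_f}\d\eta = \d f - R_z(f)\,\eta - R_t(f)\,\tau$ to evaluate $\d\eta(\sharp\,\d f, \sharp\,\d H)$, and arrives at precisely your identity $\{f,H\} - R_t(f) = -\bigl(X_H(f) + R_z(H)\,f\bigr)$, stated there in the equivalent form $\{H,f\} + R_t(f) = X_H(f) + R_z(H)\, f$. So the proposal is correct and, in substance, follows the same route.

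One warning about the Darboux-coordinate computation you present as your primary plan: do not take the displayed local bracket formula at face value. As printed it is missing a factor of $p_i$ in the $\frac{\partial f}{\partial p_i}\frac{\partial g}{\partial z}$ term, and its $\Lambda$-part carries the opposite overall sign to what the invariant definition $\Lambda(\alpha,\beta) = -\d\eta(\sharp\alpha,\sharp\beta)$ actually produces (that definition yields $\{q^i,p_j\} = -\delta^i_j$, not $+\delta^i_j$). If you substitute the printed formula literally, the bilinear terms in the first derivatives of $f$ and $H$ enter $\{f,H\}$ and $X_H f$ with the \emph{same} sign, so they do not cancel and the term-by-term comparison does not close. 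Note also that your own description is internally inconsistent on this point: if ``the same bilinear combination appears'' on both sides while only the remaining terms flip sign, then $\{f,H\} - R_t(f) + X_H(f) + R_z(H) f$ equals twice that bilinear combination rather than zero. The identity does hold, but \emph{every} term must appear with opposite sign; this is what you get once the local bracket is recomputed directly from the invariant definition of Proposition \ref{prop_Jacobi_mfold} (or, equivalently, once you run your coordinate-free argument, which is the safe version of the proof).
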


\begin{proof}
    The Jacobi bracket of $f$ and $H$ is given by
    \begin{equation}
    \begin{aligned}
        \left\{f,H\right\} = \Lambda(\d f, \d H) + f E(H) - H E(f)
        = -\d \eta \left(\sharp \d f, \sharp \d H\right) - f \Rz(H) + H \Rz(f)\,,
    \end{aligned}
    \end{equation}
    but
    \begin{equation}
        \sharp \d f = X_f + \left(\Rz(f) + f\right) \Rz - \left(1 - \Rt(f) \right) \Rt\,,
    \end{equation}
    so
    \begin{equation}
        \contr{\sharp\d f} \d \eta = \contr{X_f} \d \eta
        = \d f - \Rz(f) \eta -\Rt(f) \tau\,,
    \end{equation}
    and thus
    \begin{equation}
        \d \eta (\sharp \d f, \sharp \d H) 
        = X_H(f)  + \Rs(f) H - \Rt(f)
        \,.
    \end{equation}
    Hence,
    \begin{equation}
         \left\{H,f\right\} + \Rt(f) = X_H(f)  + \Rs(H) f
         \, .
    \end{equation}
    In particular, the right-hand side vanishes if and only if $f$ is a dissipated quantity.
\end{proof}


\begin{thm}[Noether's theorem]\label{Noether_thm}
    Let $Y$ be an infinitesimal generalized dynamical symmetry of the cocontact Hamiltonian system $(M,\tau,\eta,H)$. Then, $f = -\contr Y\eta$ is a dissipated quantity of the system. Conversely, given a dissipated quantity $f\in\Cinfty(M)$, the vector field $Y = X_f - R_t$, where $X_f$ is the Hamiltonian vector field associated to $f$, is an infinitesimal generalized dynamical symmetry and $f = -\contr Y\eta$.
\end{thm}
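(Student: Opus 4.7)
The plan is to base the whole argument on a single ``master identity'' derived from Cartan's formula. Applying $\Lie_{X_H}$ to $\eta$ and substituting the cocontact Hamilton equations $\contr{X_H}\d\eta = \d H-(\Rz H)\eta-(\Rt H)\tau$ and $\contr{X_H}\eta = -H$ gives
\begin{equation*}
    \Lie_{X_H}\eta = \contr{X_H}\d\eta + \d\contr{X_H}\eta = -(\Rz H)\eta - (\Rt H)\tau .
\end{equation*}
Combining this with the standard identity $\Lie_{X_H}(\contr{Y}\eta) = \contr{[X_H,Y]}\eta + \contr{Y}\Lie_{X_H}\eta$ yields, for every $Y\in\X(M)$, the key formula
\begin{equation*}
    X_H(\contr{Y}\eta) = \eta([X_H,Y]) - (\Rz H)(\contr{Y}\eta) - (\Rt H)(\contr{Y}\tau).
\end{equation*}
Both implications of the theorem will drop out of this identity.

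For the forward direction, I would set $f = -\contr{Y}\eta$ and plug the hypotheses $\contr{Y}\tau = 0$ and $\eta([Y,X_H]) = 0$ into the master formula; the last term vanishes and the bracket term disappears, leaving $-X_H f = -(\Rz H)(-f)$, i.e.\ $X_H f = -(\Rz H) f$, which is precisely the dissipation condition.

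For the converse, I would first record the auxiliary identities $\contr{X_f}\tau = 1$ and $\contr{X_f}\eta = -f$, which are built into the defining equations of the cocontact Hamiltonian vector field $X_f$ (they can also be recovered by contracting $\flat(X_f) = \d f - (\Rz f + f)\eta + (1-\Rt f)\tau$ with $\Rt$ and $\Rz$, using $\contr{\Rt}\d\eta = \contr{\Rz}\d\eta = 0$). Taking $Y = X_f - \Rt$ then gives $\contr{Y}\tau = 1-1 = 0$ and $\contr{Y}\eta = -f$, so the identity $f = -\contr{Y}\eta$ holds automatically. Plugging into the master formula, the left-hand side becomes $X_H(-f) = (\Rz H) f$ by the dissipation hypothesis, while the right-hand side reduces to $\eta([X_H,Y]) + (\Rz H) f$; cancelling yields $\eta([X_H,Y]) = 0$, which is the remaining infinitesimal symmetry condition.

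No real obstacle is anticipated: the argument is essentially a compact manipulation of Cartan's formula. What must be watched is the sign bookkeeping, in particular that the conventions $f = -\contr{Y}\eta$ and $X_H f = -(\Rz H) f$ mesh correctly, and that the ``correction'' $-\Rt$ in $Y = X_f - \Rt$ is exactly what cancels the inhomogeneous term $\contr{X_f}\tau = 1$ so that $Y$ genuinely satisfies $\contr{Y}\tau = 0$.
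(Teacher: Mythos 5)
Your proposal is correct and follows essentially the same route as the paper: both derive $\Lie_{X_H}\eta = -(\Rz H)\eta - (\Rt H)\tau$ from Cartan's formula and the cocontact Hamilton equations, combine it with $\Lie_{X_H}\contr{Y}\eta = \contr{[X_H,Y]}\eta + \contr{Y}\Lie_{X_H}\eta$, and then read off each implication by substituting the symmetry conditions (forward) or the dissipation condition together with $\contr{X_f}\tau=1$, $\contr{X_f}\eta=-f$ (converse). The sign bookkeeping checks out, including the equivalence of $\eta([Y,X_H])=0$ and $\eta([X_H,Y])=0$.
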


\begin{proof}
    Let $f = -\contr{Y}\eta$, where $Y$ is an infinitesimal generalized dynamical symmetry. Then,
    \begin{align*}
        \Lie_{X_H}f &= -\Lie_{X_H}\contr{Y}\eta
        = -\contr{Y}\Lie_{X_H}\eta - \contr{[X_H,Y]}\eta
        =\\
        &\contr{Y}\left( \Rz(H)\eta + \Rt(H)\tau \right)
        = \Rz(H)\contr{Y}\eta
        = -\Rz(H) f\,,
    \end{align*}
    and thus $f$ is a dissipated quantity.

    On the other hand, given a dissipated quantity $f$, let $Y = X_f - \Rt$. Then, it is clear that $f = -\contr{Y}\eta$. In addition, $\contr{Y}\tau = 0$, and
    \begin{align*}
        \contr{[X_H,Y]}\eta 
        &= \liedv{X_H} \contr{Y} \eta -  \contr{Y} \liedv{X_H} \eta
        = -\liedv{X_H} f  + \contr{Y} \left(\Rz(H) \eta +\Rt(H) \tau\right)
        \\
        &= \Rz(H) f - \Rz(H) \contr{Y} \eta = 0\, .
    \end{align*}
\end{proof}


The symmetries presented yield dissipated quantities. However, we are also interested in finding conserved quantities. The latter are important due to their elation with complete solutions of the Hamilton--Jacobi problem (see Section \ref{section_HJ_action_dep}).

\begin{dfn}
    A \emph{conserved quantity} of a cocontact Hamiltonian system $(M,\tau,\eta,H)$ is a function $g\in\Cinfty(M)$ such that
    $$ X_H g = 0\,. $$
\end{dfn}

Taking into account that every dissipated quantity changes with the same rate $\Rz(H)$, we have the following result, whose proof is straightforward.

\begin{prop}\label{prop:conserved-dissipated}
    Consider a cocontact Hamiltonian system $(M,\tau,\eta,H)$.
    \begin{itemize}
        \item If $f_1,f_2$ are two dissipated quantities and $f_2\neq 0$, then $f_1/f_2$ is a conserved quantity.
        \item If $f$ is a dissipated quantity and $g$ is a conserved quantity, then $fg$ is a dissipated quantity.
    \end{itemize}
\end{prop}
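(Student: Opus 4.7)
The plan is to apply the Leibniz rule for the vector field $X_H$ acting on products and quotients of functions, together with the defining identities for conserved and dissipated quantities. Both items reduce to a one-line calculation, so the main task is to present the algebra cleanly.

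For the first item, I would start from the Leibniz rule for $X_H$ applied to $f_1/f_2$ on the open set where $f_2 \neq 0$:
\begin{equation}
X_H\!\left(\frac{f_1}{f_2}\right) = \frac{(X_H f_1)\, f_2 - f_1\, (X_H f_2)}{f_2^{\,2}}.
\end{equation}
Using that both $f_1$ and $f_2$ are dissipated quantities, $X_H f_i = -(\Rz H)\, f_i$ for $i=1,2$, the numerator becomes $-(\Rz H)\, f_1 f_2 + f_1\,(\Rz H)\, f_2 = 0$, so $f_1/f_2$ is a conserved quantity.

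For the second item, the Leibniz rule gives
\begin{equation}
X_H(fg) = (X_H f)\, g + f\, (X_H g) = -(\Rz H)\, f\, g + f\cdot 0 = -(\Rz H)\, (fg),
\end{equation}
which is exactly the defining condition for $fg$ to be a dissipated quantity.

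There is no real obstacle here: the only subtlety is that the quotient $f_1/f_2$ is only defined where $f_2 \neq 0$, but the statement already assumes this, so the calculation is valid pointwise. No additional lemma from the paper beyond the definitions of conserved/dissipated quantity and the derivation property of $X_H$ is needed.
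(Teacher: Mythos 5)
Your proof is correct and is exactly the ``straightforward'' argument the paper alludes to (it omits the proof, noting only that all dissipated quantities change at the same rate $\Rz(H)$): apply the Leibniz rule for $X_H$ together with the defining identities $X_H f = -(\Rz H) f$ and $X_H g = 0$. Nothing is missing.
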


\section{The action-independent approach}\label{section_HJ_action_indep}

\subsection{Hamilton--Jacobi theory. The action-independent approach}

 
  Let $(\R\times \T^*Q\times \R, \tau, \eta, H)$ be a cocontact Hamiltonian system, where $\tau = \d t$, $\eta = \d z - {\theta_0}$ and $\theta_0 = p_i\d q^i$ is the Liouville one-form of the cotangent bundle. Consider a section $\gamma$ of the bundle $\pi_Q^t: \R\times \T^*Q\times \R\to \R\times Q$, locally given by
  \begin{equation}
    \begin{aligned}
        \gamma: \R\times Q &\longrightarrow  \R\times \T^*Q\times \R\\
        \left(t, q^i\right) &\longmapsto \left(t, q^i, \gamma_i(t,q), S(t,q)\right) \,.
    \end{aligned}
  \end{equation}
  Let us introduce the vector field $X_H^\gamma$ on $\R\times Q$ given by
  \begin{equation}
      X_H^\gamma = \T \pi_Q^t \circ X_H \circ \gamma\, ,
  \end{equation}
  where $X_H$ is the Hamiltonian vector field of $(\R\times \T^*Q\times \R, \tau, \eta, H)$. Suppose that $X_H^\gamma$ and $X_H$ are $\gamma$-related, i.e.,
  \begin{equation}
      X_H \circ \gamma = \T \gamma \circ X_H^\gamma\, ,
      \label{eq_gamma_related}
  \end{equation}
  so that the following diagram commutes:
\begin{center}
    \begin{tikzcd}
        \R \times \cT Q \times \R \arrow[d, "\pi_Q^t"] \arrow[r, "X_H"]
        & \T(\R \times \cT Q\times \R) \arrow[d, swap, "\T\pi_Q^t"] \\
         \R \times Q \arrow[r, "X_H^\gamma"] \arrow[u, "\gamma", bend left]     & \T (\R \times Q) \arrow[u, swap, "\T \gamma", bend right]
    \end{tikzcd}
\end{center}
Locally,
\begin{equation}
    X_H \circ \gamma = \frac{\partial}{\partial t} 
    + \frac{\partial H}{\partial p_i} \frac{\partial }{\partial q^i} 
    - \left( \frac{\partial H }{\partial q^i} + \gamma_i \frac{\partial H}{\partial z} \right) \frac{\partial }{\partial p_i}
    + \left(\gamma_i \frac{\partial H}{\partial p_i} -H\right) \frac{\partial }{\partial z}\,,
\end{equation}
and
\begin{equation}
   \T\gamma \circ X_H^\gamma = \frac{\partial}{\partial t} 
    + \frac{\partial H}{\partial p_i} \frac{\partial }{\partial q^i}
    +\left(\frac{\partial \gamma_i} {\partial t} + \frac{\partial H} {\partial p_j} \frac{\partial \gamma_j} {\partial q^i} \right) \frac{\partial  } {\partial p_i}
    + \left(\frac{\partial S}{\partial t} + \frac{\partial S}{\partial q^i}\frac{\partial H}{\partial p_i} \right) \frac{\partial }{\partial z}\, ,
\end{equation}
so equation~\eqref{eq_gamma_related} holds if and only if
\begin{equation}\label{Sardonian_generalized_HJ_local}
    \begin{aligned} 
       - \left( \frac{\partial H }{\partial q^i} + \gamma_i \frac{\partial H}{\partial z} \right)
       &= \frac{\partial \gamma_i} {\partial t} + \frac{\partial H} {\partial p_j} \frac{\partial \gamma_j} {\partial q^i} \, ,\\
       \gamma_i \frac{\partial H}{\partial p_i} - H
       &= \frac{\partial S}{\partial t} + \frac{\partial S}{\partial q^i}\frac{\partial H}{\partial p_i}\, .
    \end{aligned}
\end{equation}
    


\begin{dfn}
   Given a section $\alpha: \R \times Q \to \R \times \bigwedge\nolimits^k\cT Q$ and $t \in \mathbb{R}$, let 
   \begin{equation}
       \begin{aligned}
           \alpha_{(t)} :  Q &\longrightarrow \bigwedge\nolimits^k\cT Q\\
           x &\longmapsto \pr_{\Lambda^k \cT Q} (\alpha(t,x))\, ,
       \end{aligned}
   \end{equation}
   where $\pr_{\bigwedge\nolimits^k \cT Q}: \R \times \bigwedge\nolimits^k\cT Q \to \bigwedge\nolimits^k\cT Q$ is the canonical projection. The \emph{exterior derivative of $\alpha$ at fixed $t$} is the section of $\R \times \bigwedge\nolimits^{k+1}\cT Q \to \R \times Q$ given by
   \begin{equation}
       \d_Q \alpha(t, x) = (t, \d \alpha_{(t)} (x))\,.
   \end{equation}
\end{dfn}

In coordinates, for $f\in\Cinfty(\R\times Q)$ and $\alpha(t,x) = (t,\alpha_i\d_x q^i)$ a section of the bundle $\R \times Q\to \R \times \bigwedge\nolimits^k\cT Q$, the local expressions are
\begin{equation}
    \begin{split}
        \d_Q f &= \left(t, \frac{\partial f}{\partial  q^i} \d_x q^i\right)\,,\\
        \d_Q \alpha &= \left(t, \frac{\partial \alpha_j}{\partial q^i} \d_x q^i \wedge \d_x q^j \right)\,.
    \end{split}
\end{equation}
Since we shall be considering fixed $t$, we will often make the abuse of notation $$ \d_Q f = \dfrac{\partial f}{\partial q^i} \d_x q^i\,. $$

\begin{dfn}
     Given $f\in \Cinfty(\R\times Q)$, the \emph{$1$-jet of $f$ at fixed $t$} is the section $j_t^1 f\colon \R \times Q \to \R \times \cT Q \times \R$ given by
      $$ j_t^1 f(t, x) =(\d_Q f, f)\,. $$
\end{dfn}

Let us recall that a \emph{Legendrian submanifold} $N\hookrightarrow M$ of a $(2n+1)$-dimensional contact manifold $(M, \eta)$ is an $n$-dimensional submanifold such that $\restr{\eta}{N} = 0$ (see \cite{deLeon2021a}).

\begin{prop}\label{proposition_Legendrian_jet}
     Let $\gamma$ be a section of $\pi_Q^t: \mathbb{R} \times \cT Q \times \R \to \R\times Q$. Then, for every $t \in \mathbb{R}$, $\Ima \gamma(t, \cdot)$ is a Legendrian submanifold of $(\cT Q\times \R, \eta)$ if and only if it is the image of the 1-jet at fixed $t$ of a function, namely, 
     $$ \gamma(t,x) = j_t^1 f(t, x) = (\d_Q f, f) \,.$$
\end{prop}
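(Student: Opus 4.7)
The plan is to work in Darboux coordinates and reduce the statement to a direct pullback computation. The setup is: for fixed $t$, the map $\gamma(t,\cdot)\colon Q \to \cT Q\times\R$ reads $q\mapsto(q,\gamma_i(t,q),S(t,q))$, which is an injective immersion onto an $n$-dimensional submanifold of the $(2n+1)$-dimensional contact manifold $(\cT Q\times\R,\eta)$. So the dimensional requirement for being Legendrian is automatic, and the entire content of the proposition lies in the vanishing of $\eta$ along the image.

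First I would compute the pullback $\gamma(t,\cdot)^\ast\eta$. Using $\eta = \d z - p_i\,\d q^i$ and substituting $p_i = \gamma_i(t,q)$, $z = S(t,q)$, and noting that since $t$ is held fixed the only differentials that appear are with respect to the coordinates $q^i$ on $Q$, one obtains
\begin{equation}
    \gamma(t,\cdot)^\ast\eta \;=\; \d_Q S(t,\cdot) - \gamma_i(t,q)\,\d q^i \;=\; \left(\frac{\partial S}{\partial q^i}(t,q) - \gamma_i(t,q)\right)\d q^i\,.
\end{equation}
Hence $\Ima\gamma(t,\cdot)$ is Legendrian (i.e.\ $\restr{\eta}{\Ima\gamma(t,\cdot)}=0$) if and only if $\gamma_i(t,q) = \dfrac{\partial S}{\partial q^i}(t,q)$ for every $i$.

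For the implication $(\Rightarrow)$, the condition just obtained, combined with the definition of $j_t^1 S$, yields
\begin{equation}
    \gamma(t,q) = (t, q, \d_Q S(t,q), S(t,q)) = j_t^1 S(t,q)\,,
\end{equation}
so $\gamma(t,\cdot)$ is the $1$-jet at fixed $t$ of the function $f := S$ appearing as the last component of $\gamma$. For the converse $(\Leftarrow)$, if $\gamma(t,x) = j_t^1 f(t,x) = (\d_Q f, f)$ for some $f\in\Cinfty(\R\times Q)$, then $\gamma_i = \partial f/\partial q^i$ and $S=f$, and substituting into the formula above gives $\gamma(t,\cdot)^\ast\eta = 0$, so the image is Legendrian.

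There is no substantial obstacle: the result is essentially a coordinate-free reformulation of the standard fact that Legendrian sections of the $1$-jet bundle are graphs of $1$-jets of functions, and the only subtlety is the parametric dependence on $t$, which is handled cleanly by the earlier definitions of $\d_Q$ and $j_t^1$ that differentiate only in the $Q$-directions.
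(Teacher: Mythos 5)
Your proof is correct and follows essentially the same route as the paper: both reduce the Legendrian condition to the vanishing of the pullback $\gamma(t,\cdot)^\ast\eta = \d_Q S - \gamma_i\,\d q^i$, which holds precisely when $\gamma_i = \partial S/\partial q^i$, i.e.\ when $\gamma(t,\cdot) = j_t^1 S$. The only cosmetic difference is that the paper writes the computation slightly more intrinsically as $\gamma^\ast\eta = \d_Q f - \alpha$ using the Liouville form, while you carry it out in Darboux coordinates; the content is identical.
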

\begin{proof}
    Let $t \in \mathbb{R}$ and let $\gamma: \R\times Q \to \R \times \cT Q \times \R$ such that $\gamma(t,q) = (t, \alpha(t,q),f(t,q))$. Clearly, $\gamma^*\tau = 0$, hence $\Ima \gamma$ is Legendrian if and only if $\gamma^* \eta = 0$. Thus,
    \begin{equation}
        \gamma^* \eta = f^*\dd z - \alpha^* \theta_Q = \dd_Q f - \alpha\,,
    \end{equation}
    so $\gamma^* \eta$ vanishes precisely when $\alpha= \dd_Q f$.
\end{proof}

Now, suppose that $\Ima \gamma$ is a Legendrian submanifold. By Proposition \ref{proposition_Legendrian_jet}, we have that
\begin{equation}
    \gamma_i = \frac{\partial S}{\partial q^i}\,,
\end{equation}
so equations~\eqref{Sardonian_generalized_HJ_local} can be written as
\begin{subequations}
\begin{align}
    - \left( \frac{\partial H }{\partial q^i} + \frac{\partial S}{\partial q^i} \frac{\partial H}{\partial z} \right)
   &= \frac{\partial^2 S}{\partial t \partial q^i} + \frac{\partial H} {\partial p_j} \frac{\partial S}{\partial q^i \partial q^j} \, ,
   \label{Sardonian_HJ_local_a}
   \\
    \frac{\partial S}{\partial q^i} \frac{\partial H}{\partial p_i} - H
   &= \frac{\partial S}{\partial t} + \frac{\partial S}{\partial q^i}\frac{\partial H}{\partial p_i}\, .
   \label{Sardonian_HJ_local_b}
\end{align}    
\label{Sardonian_HJ_local}
\end{subequations}
equation~\eqref{Sardonian_HJ_local_a} implies that
\begin{equation}
    \d_Q(H\circ j^1_t S)  + \d_Q \left(\Rt S \right) = 0\,,
    \label{eq:HJ_Sardonian_redundant}
\end{equation}
while equation~\eqref{Sardonian_HJ_local_b} yields
\begin{equation}
    H = - \frac{\partial S}{\partial t}\, ,
\end{equation}
that is,
\begin{equation}
    H \circ j^1_t S + \frac{\partial S}{\partial t} = 0\,.
    \label{eq:HJ_Sardonian}
\end{equation}
Clearly, equation~\eqref{eq:HJ_Sardonian_redundant} is implied by equation~\eqref{eq:HJ_Sardonian}. We have thus proven the following.

\begin{thm}[Action-independent Hamilton--Jacobi theorem]\label{thm_action_indep}
Let $\gamma$ be a section of $\pi_Q^t: \R\times \T^*Q\times \R\to \R\times Q$ such that, for every $t\in \R$, $\Ima \gamma(t, \cdot)$ is a Legendrian submanifold of $(\cT Q\times \R, \eta)$. Then, $X_H^\gamma$ and $X_H$ are $\gamma$-related if and only if equation~\eqref{eq:HJ_Sardonian} holds. This equation will be called the \emph{action-independent Hamilton--Jacobi} equation for $(\R\times \T^*Q\times \R, \tau, \eta, H)$. The function $S$ such that $\gamma = j_t^1 S$ is called a \emph{generating function} for $H$.
\end{thm}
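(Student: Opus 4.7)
The plan is to prove the equivalence by a direct coordinate computation in Darboux coordinates. By the Cocontact Darboux Theorem \ref{thm:Darboux-cocontact} it suffices to work locally on $\R \times \cT Q \times \R$ with coordinates $(t, q^i, p_i, z)$, in which both the Hamiltonian vector field $X_H$ and the section $\gamma(t,q) = (t, q^i, \gamma_i(t,q), S(t,q))$ have the explicit forms recorded in the preceding computation leading up to \eqref{Sardonian_generalized_HJ_local}.

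First I would unpack the $\gamma$-relatedness condition $X_H \circ \gamma = \T\gamma \circ X_H^\gamma$ in those coordinates. The $\partial/\partial t$ and $\partial/\partial q^i$ components match automatically from the definition $X_H^\gamma = \T\pi_Q^t \circ X_H \circ \gamma$, so all nontrivial content lies in the $\partial/\partial p_i$ and $\partial/\partial z$ components; matching those produces the coupled local system \eqref{Sardonian_generalized_HJ_local} in the unknowns $\gamma_i$ and $S$. Next I would invoke Proposition \ref{proposition_Legendrian_jet}, which under the Legendrian hypothesis forces $\gamma = j_t^1 S$ and hence $\gamma_i = \partial S/\partial q^i$. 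Substituting this identification into the $\partial/\partial z$ equation of \eqref{Sardonian_generalized_HJ_local}, the two occurrences of $(\partial S/\partial q^i)(\partial H/\partial p_i)$ cancel and leave the scalar Hamilton--Jacobi equation \eqref{eq:HJ_Sardonian}.

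The only remaining point, and the place where a little care is required, is to show that the $\partial/\partial p_i$ equation in \eqref{Sardonian_generalized_HJ_local} (equivalently \eqref{Sardonian_HJ_local_a}) is already a consequence of the scalar equation \eqref{eq:HJ_Sardonian}, so it imposes no additional constraint. This reduces to applying $\d_Q$ to \eqref{eq:HJ_Sardonian}: differentiating $H \circ j_t^1 S = -\partial S/\partial t$ with respect to each $q^i$ and using the chain rule with $\gamma_i = \partial S/\partial q^i$ reproduces \eqref{Sardonian_HJ_local_a} after rearrangement, which is precisely the observation \eqref{eq:HJ_Sardonian_redundant}. The main obstacle, if one wished to avoid coordinates, would be to formulate the cancellation intrinsically in terms of the Jacobi bracket and the contact form; but in Darboux coordinates both the cancellation and the redundancy are immediate, and the theorem follows in a few lines.
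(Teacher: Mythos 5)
Your proof is correct and follows essentially the same route as the paper: the coordinate computation of both sides of the $\gamma$-relatedness condition yielding the system \eqref{Sardonian_generalized_HJ_local}, the identification $\gamma_i = \partial S/\partial q^i$ via Proposition \ref{proposition_Legendrian_jet}, the cancellation in the $\partial/\partial z$ component giving \eqref{eq:HJ_Sardonian}, and the observation that the $\partial/\partial p_i$ equations are just $\d_Q$ applied to \eqref{eq:HJ_Sardonian} and hence redundant. This is precisely the derivation the paper carries out in the paragraphs preceding the theorem statement, so there is nothing to add.
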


In order to study the integrability of cocontact Hamiltonian systems, it is of interest to introduce the following.

\begin{dfn}\label{def:complete_sols_indep}
Let $(\R\times \T^*Q\times \R, \tau, \eta, H)$ be a cocontact Hamiltonian system.
A \emph{complete solution of the action-independent Hamilton--Jacobi problem} for $(\R\times \T^*Q\times \R, \tau, \eta, H)$ is a local diffeomorphism $\Phi\colon \R\times Q \times \R^{n+1} \to \R \times \cT Q \times \R$ such that, for each $\lambda \in \R^{n+1}$,
\begin{equation}
\begin{aligned}
    \Phi_\lambda \colon \R\times Q &\to \R \times\cT Q \times \R  \\
    \left(t, q^i\right) &\mapsto \Phi\left(t, q^i, \lambda\right) 
\end{aligned}    
\end{equation}
is a solution of the action-independent Hamilton--Jacobi problem for $(\R\times \cT Q\times \R, \tau, \eta, H)$.
\end{dfn}

It is worth noting that complete solutions depend on $n+1$ real parameters, one extra parameter in comparison with the (co)symplectic case. In order to consider complete solutions depending on just $n$ parameters, we shall introduce a different approach to the Hamilton--Jacobi problem for (co)contact Hamiltonian systems (see Section \ref{section_HJ_action_dep}).

Let $\alpha\colon \R\times Q\times \R^{n+1}\rightarrow \mathbb{R}^{n+1}$, and $\pi_i\colon \R^{n+1} \to \R$ denote the canonical projections. One can define the $n+1$ functions $f_i=\pi_i\circ \alpha\circ \Phi^{-1}$ on $\R \times \cT Q \times \R$, so that the following diagram commutes:
\[
    \begin{tikzcd}
        \R \times Q \times \R^{n+1} \arrow[r, "\Phi", shift left] \arrow[d, "\alpha"] & \R\times \cT Q \times \R \arrow[l, "\Phi^{-1}", shift left] \arrow[d, "f_i"] \\
        \R^{n+1} \arrow[r, "\pi_i"] & \R
    \end{tikzcd}
\]

\begin{thm}\label{thm:complete_solution_involution_indep}
   Let $\Phi\colon \R\times Q \times \R^{n+1} \to \R \times \cT Q \times \R$ be a complete solution of the action-independent Hamilton--Jacobi problem for $(\R\times \T^*Q\times \R, \tau, \eta, H)$. Then, 
    \begin{enumerate}[{\rm (i)}]
        \item For each $i\in \{1, \dotsc, n+1\}$, the function $f_i=\pi_i\circ \alpha\circ \Phi^{-1}$ is a constant of the motion. However, these functions are not necessarily in involution, i.e., $\{f_i, f_j\}\neq 0$.
        \item For each $i\in \{1, \dotsc, n+1\}$, the function $\hat f_i= g f_i$, where $g$ is a dissipated quantity, is also a dissipated quantity. Moreover, if $\Rt(H) = 0$ and taking $g=H$, these functions are in involution, i.e., $\{\hat f_i, \hat f_j\}= 0$.
    \end{enumerate}
\end{thm}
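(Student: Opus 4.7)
The natural starting point is part (i). I would exploit the defining property of the Hamilton--Jacobi problem: for each fixed $\lambda \in \R^{n+1}$, since $\Phi_\lambda$ is a solution, the Hamiltonian vector field $X_H$ and its projection $X_H^{\Phi_\lambda}$ are $\Phi_\lambda$-related, which means $X_H$ is tangent to the submanifold $\Ima \Phi_\lambda$. Consequently, any integral curve of $X_H$ that meets $\Ima \Phi_\lambda$ is entirely contained in it. By construction $f_i \circ \Phi_\lambda \equiv \lambda_i$, so $f_i$ takes the constant value $\lambda_i$ along each such leaf, and thus $X_H(f_i) = 0$. The negative part of (i) is a structural observation: nothing in the construction forces the Jacobi brackets $\{f_i, f_j\}$ to vanish, as the $f_i$ are only required to be transverse coordinates on the parameter space; a concrete counterexample (e.g.\ one of the examples treated later in the paper) confirms this.

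The first half of (ii) is immediate from Proposition \ref{prop:conserved-dissipated}: since each $f_i$ is a conserved quantity by (i) and $g$ is by hypothesis dissipated, the product $\hat f_i = g f_i$ is itself dissipated. For the involution assertion, I specialise to $g = H$ under the assumption $\Rt H = 0$; the latter ensures $X_H H = -\Rz(H)\,H$, so $H$ itself is a dissipated quantity and all the $\hat f_i$ share the same dissipation rate $-\Rz(H)$. The plan is to expand $\{Hf_i, Hf_j\}$ via the modified Leibniz identity for a Jacobi bracket,
\begin{equation}
    \{A, BC\} = B\,\{A, C\} + C\,\{A, B\} + BC\, E(A),
\end{equation}
with $E = -\Rz$, applied once in each slot, together with the antisymmetry of the bracket. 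This produces an expression linear in $\{H, f_i\}$, $\{H, f_j\}$ and $\{f_i, f_j\}$, with multiplicative correction terms involving $H$, $f_i$, $f_j$ and $\Rz(\cdot)$. Using the relation $\{H, f\} = X_H(f) + \Rz(H)f - \Rt(f)$ established in the proof of the proposition preceding Theorem \ref{Noether_thm}, together with $X_H f_i = X_H f_j = 0$ and $\Rt H = 0$, I would then verify that the remaining terms reorganise and cancel to yield $\{Hf_i, Hf_j\} = 0$.

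The main obstacle is precisely this last cancellation, which requires careful bookkeeping of Reeb-derivative terms specific to the Jacobi structure $(\Lambda, -\Rz)$ on the cocontact manifold. The hypothesis $\Rt H = 0$ is essential: it is what makes $H$ a dissipated quantity and aligns its dissipation rate with those of the $\hat f_i$, so that the algebraic mechanism by which products of quantities sharing a common conformal factor come into involution under the Jacobi bracket can be invoked. Upon restriction to a single contact leaf, the result reduces to the analogous involution statement for autonomous contact systems in \cite{deLeon2021d}, which provides a useful sanity check on the computation.
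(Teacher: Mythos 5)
Your proposal is correct and follows essentially the same route as the paper: part (i) via tangency of $X_H$ to the leaves $\Ima\Phi_\lambda = \bigcap_i f_i^{-1}(\lambda_i)$, the first half of (ii) via Proposition \ref{prop:conserved-dissipated}, and the involution claim via the Leibniz-type expansion of $\{Hf_i,Hf_j\}$ for the Jacobi bracket with $E=-\Rz$, using $X_Hf_i=0$ and $\Rt H=0$ to cancel the remaining terms. The only difference is that you sketch the final cancellation rather than writing it out, but the bookkeeping you describe is exactly the computation the paper performs.
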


\begin{proof}
We can write 
$$
\Ima \Phi_\lambda = \{ x \in \R\times \cT Q \times \R \mid f_i(x) = \lambda_i,\  i=1, \dotsc, n\} = \bigcap_{i=1}^n \, f_i^{-1}(\lambda_i)\,,
$$
where $\lambda = (\lambda, \dotsc, \lambda_n) \in \R^n$.
Since $X_H$ is tangent to any of the submanifolds $\Ima \Phi_\lambda$, we deduce that
$$
X_H f_i = 0\,,
$$
so each of the functions $f_i$, for $i=1, \ldots, n$, is a constant of the motion.

On the other hand, we can compute
$$
\{f_i, f_j\} = X_{f_j} (f_i) - \Rt (f_i) -f_i \Rz(f_j)
\,,
$$
which does not vanish in general.
By Proposition \ref{prop:conserved-dissipated}, the product of a conserved quantity and a dissipated quantity is a dissipated quantity. Let $f_i$ and $f_j$ be conserved quantities and take $g = H$. Then,
\begin{equation}
\begin{aligned}
    \{\hat f_i, \hat f_j \}
    &= \{H f_i, H f_j \} = f_j \{ Hf_i, H \} + H\{Hf_i,f_j\} - f_jH\Rz(Hf_i) \\
    &= - f_jH\{H,f_i\} + f_if_jH\Rz(H) - f_iH\{f_j,H\} - H^2\{f_j,f_i\} + f_iH^2\Rz(f_j) - f_jH\Rz(Hf_i)\\
    &= 0\,.
\end{aligned}
\label{eq_involution_dissipated}
\end{equation}

\end{proof}

\begin{rmrk}\label{remark_LA}
Making use of a symplectization, one can study a time-independent contact Hamiltonian system as an exact symplectic Hamiltonian system with one additional dimension. Dissipated quantities in involution with respect to the Jacobi bracket of the contact system lead to conserved quantities in involution with respect to the Poisson bracket of the associated symplectic system. On the other hand, the celebrated Liouville--Arnold Theorem \cite{Arnold1978} permits to construct action-angle coordinates of a $2n$-dimensional symplectic Hamiltonian system with $n$ conserved quantities in involution, leading to integrability by quadratures. Therefore, dissipated quantities in involution could lead to integrability by quadratures of contact Hamiltonian systems. However, this relation is highly non-trivial and it is subject of further research. An alternative approach to Hamilton--Jacobi theory and integrability by quadratures for contact Hamiltonian systems can be found in \cite{Grillo2020}.
\end{rmrk}


{
Complete solutions of the Hamilton--Jacobi problem {may} be used to integrate the dynamics of the system as follows:
\begin{enumerate}[{\rm (i)}]
    \item Solve the Hamilton--Jacobi equation
    \begin{equation}
        H \circ j^1_t S_\lambda + \parder{S_\lambda}{t} = 0
    \end{equation}
    for arbitrary values of $\lambda \in \R^{n+1}$. Let $\Phi_\lambda = j_t^1 S_\lambda$.
    \item Compute the integral curves $\sigma\colon \R \to \R \times Q,\ \sigma(t) = (t, q^i(t))$ of $X_H^\gamma$, which are given by
    \begin{equation}
        \frac{\d q^i}{\d t} = \restr{\frac{\partial H}{\partial p_i}}{\Ima \Phi_\lambda}\, , \label{Eq_integrate}
    \end{equation}
    where the restriction to $\Ima \Phi_\lambda$ means that one has to write $p_i=\tparder{S_\lambda}{q^i}$ and $z=S_\lambda$.
    \item The integral curves $\tilde \sigma$ of $X_H$ on $\Im \Phi_\lambda$ are given by $\Phi_\lambda \circ \sigma$, namely,
    \begin{equation}
        \tilde \sigma(t) =\Phi_\lambda \circ \sigma(t)= \left(\sigma(t), \parder{S_\lambda}{q^i} (\sigma(t)), S_\lambda (\sigma(t)) \right)\, .
    \end{equation}
\end{enumerate}
}

It is worth noting that computing the integral curves of $X_H^\gamma$ is not always straightforward. 
However, there are some relevant cases in which it is particularly easy.

\begin{exmpl}
Suppose that $Q=\R^n$ with the Euclidean norm.
If the generating function is \emph{separable}, i.e., $S(t, q^1, q^2, \ldots,q^n) = S_0(t) + S_1(q^1)+\dotsb + S_n(q^n)$, and the Hamiltonian is mechanical, namely, $H=\dfrac{\norm{p}^2}{2m(t)} + V(t,q,z)$, then equations~\eqref{Eq_integrate} simplify to
\begin{equation}
       \frac{\d q^i}{\d t} = \frac{1}{m(t)} S_i'(q^i)\, .
\end{equation}
\end{exmpl}

\subsection{Example: the free particle with time-dependent mass and a linear external force}
    Consider the cocontact Hamiltonian system $(\R\times \cT Q \times \R, \d t, \eta, H)$, where
    \begin{equation}
        H = \frac{p^2}{2m(t)} - \frac{\kappa}{m(t)} z,
    \end{equation}
    with $m$ a function depending only on $t$, expressing the mass of the particle, and $\kappa$ a positive constant. 
    Then, the action-independent Hamilton--Jacobi equation for $H$ is given by 
    \begin{equation}
        \frac{1}{2m(t)}\left( \frac{\partial S}{\partial q}\right)^2 - \frac{\kappa}{m(t)} S(t,q) + \frac{\partial S}{\partial t} = 0\, ,
    \end{equation}
    that is,
     \begin{equation}
       \left( \frac{\partial S}{\partial q}\right)^2 - 2 \kappa S(t,q) + 2m(t)\frac{\partial S}{\partial t} = 0\, .
        \label{HJ_example_indep}
    \end{equation}
    Suppose that the generating function $S$ is separable, namely, $S(t,q)= \alpha(t)+\beta(q)$. Then, equation~\eqref{HJ_example_indep} can be written as
    \begin{equation}
         \left( \frac{\d \beta}{\d q}\right)^2 - 2 \kappa \alpha(t) - 2 \kappa \beta(q) + 2m(t)\frac{\d \alpha}{\d t} = 0\, ,
    \end{equation}
    so
   \begin{align}
       &  2m(t)\frac{\d \alpha}{\d t} -2 \gamma \alpha(t) = 0\,,\\
       & \left( \frac{\d \beta}{\d q}\right)^2 - 2 \kappa \beta(q) = 0.
   \end{align}
   Then,
   \begin{equation}
       \alpha_{\lambda_1}(t) = \lambda_1 e^{\kappa \int_0^t \frac{1}{m(s)} \d s}\,\qquad
       \beta_{\lambda_2}(q) = \left( \sqrt{\frac{\kappa}{2}} q + \lambda_2 \right )^2,
   \end{equation}
   that is,
   \begin{equation}
       S_\lambda(t,q) = \lambda_1 e^{\kappa \int_0^t \frac{1}{m(s)} \d s} + \left( \sqrt{\frac{\kappa}{2}} q + \lambda_2 \right )^2,
   \end{equation}
   and 
   $$\Phi(t,q,\lambda) = j_t^1 S_\lambda(t,q) = \left(t, q, \sqrt{2\kappa}\left( \sqrt{\frac{\kappa}{2}} q + \lambda_2 \right),  \lambda_1 e^{\kappa \int_0^t \frac{1}{m(s)} \d s} + \left( \sqrt{\frac{\kappa}{2}} q + \lambda_2 \right )^2\right)$$ 
   is a complete solution. Its inverse is given by
   \begin{equation}
       \Phi^{-1} \colon (t,q, p, z) \mapsto
       \left(t, q, e^{-\kappa \int_0^t \frac{1}{m(s)} \d s} \left(z - \frac{p^2}{2\kappa}  \right),\frac{p-\kappa q}{\sqrt{2\kappa}}\right)\, .
   \end{equation}
   Hence,
   \begin{equation}
       f_1 (t, q, p,z) = e^{-\kappa \int_0^t \frac{1}{m(s)} \d s} \left(z - \frac{p^2}{2\kappa} \right)\, ,
   \end{equation}
   and
   \begin{equation}
       f_2 (t, q, p,z) = \frac{p-\kappa q}{\sqrt{2\kappa}}
   \end{equation}
   are conserved quantities.
   
   {The Hamiltonian vector field of $H$ is given by
   \begin{equation}
       X_H 
     = \parder{}{t} + \frac{p}{m(t)}\parder{}{q} + \frac{\kappa p}{m(t)} \parder{}{p} + \left(\frac{p^2}{2m(t)} + \frac{\kappa}{m(t)} z\right)\parder{}{z}\,.
   \end{equation}
   One can check that $X_H(f_1)=X_H(f_2)=0$. Moreover,
    \begin{equation}
       X_H^\gamma
     = \restr{\parder{}{t} + \frac{p}{m(t)}\parder{}{q}}{\Ima \Phi_\lambda}
     = \parder{}{t} + \frac{\sqrt{2\kappa}\left( \sqrt{\frac{\kappa}{2}} q + \lambda_2 \right)}{m(t)}\parder{}{q}
     \,,
   \end{equation}
   whose integral curves $\sigma(t)=(t, q(t))$ are given by
   \begin{equation}
       q(t) = e^{\int _1^t\frac{\kappa }{m(s)}\d s} \left(\int _1^t\frac{\sqrt{2\kappa} e^{-\int _1^u\frac{\kappa }{m(s)}\d s }  \lambda }{m(u)}\d u+c\right)\, ,
   \end{equation}
   where $c$ is a constant.
   Then, the integral curves of $X_H$ along $\Ima \Phi_\lambda$ are given by $\Phi_\lambda \circ \sigma(t) = (t, q(t), p(t), z(t))$, where
   \begin{equation}
       p(t) = \sqrt{2\kappa}\left( \sqrt{\frac{\kappa}{2}} q(t) + \lambda_2 \right)\, ,
   \end{equation}
   and
   \begin{equation}
       z(t) = \lambda_1 e^{\kappa \int_0^t \frac{1}{m(s)} \d s} + \left( \sqrt{\frac{\kappa}{2}} q(t) + \lambda_2 \right)^2\, .
   \end{equation}
   }

\subsection{The variational interpretation of the solution to Hamilton--Jacobi equation}
Suppose that $\sigma\colon [0,T]\to Q$ is a trajectory given by the cocontact Hamilton equations \eqref{eq:Hamilton-cocontact} for the Hamiltonian function $H\colon \R \times \cT Q\times \R \to \R $.
If $\gamma= j_t^1 S$ is a solution to the Hamilton--Jacobi problem for $H$, the generating function $S$ can be interpreted as the action of the lifted curve $j_t^1 S \circ \sigma$ up to a constant.
\begin{thm}
Suppose that $S\in \Cinfty(\R \times Q)$ is a generating function for $H$.
Let $\sigma\colon[0,T]\to Q$ be a curve with local expression $\sigma(t) = (q^i(t))$ such that $c = (\mathrm{Id},\sigma)\colon t\in\R \mapsto (t,\sigma(t))\in\R\times Q$ is an integral curve of $X_H^\gamma$. Then,
\begin{equation}
    (S\circ c)(t) = \mathcal{A}(j_t^1 S \circ \sigma)(t) + S_0\,,
\end{equation}
for some $ S_0 \in \R$, where $\mathcal{A}$ denotes the action map \eqref{action_Hamiltonian}.
\end{thm}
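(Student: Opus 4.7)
The plan is to verify the identity by showing that the function $F(t) = (S\circ c)(t) - \mathcal{A}(j_t^1 S\circ \sigma)(t)$ is constant on $[0,T]$ and identifying that constant with $S_0$. The only tools required are the chain rule, the fundamental theorem of calculus, and the action-independent Hamilton--Jacobi equation \eqref{eq:HJ_Sardonian}.

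First, I would compute $\frac{\d}{\d t}(S\circ c)(t)$ by the chain rule, obtaining
\[
\frac{\d}{\d t} S(t,q(t)) = \frac{\partial S}{\partial t}(t,q(t)) + \frac{\partial S}{\partial q^i}(t,q(t))\,\dot{q}^i(t).
\]
For the action term, since the lifted curve $j_t^1 S\circ \sigma$ has components $p_i = \partial S/\partial q^i$ and $z = S$, the fundamental theorem of calculus applied to \eqref{action_Hamiltonian} gives
\[
\frac{\d}{\d t}\mathcal{A}(j_t^1 S\circ \sigma)(t) = \frac{\partial S}{\partial q^i}(t,q(t))\,\dot{q}^i(t) - H\bigl(t,q(t),\tfrac{\partial S}{\partial q^i}(t,q(t)),S(t,q(t))\bigr).
\]

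Subtracting the two expressions, the momentum terms $\frac{\partial S}{\partial q^i}\dot{q}^i$ cancel, so that
\[
F'(t) = \frac{\partial S}{\partial t}(t,q(t)) + (H\circ j_t^1 S)(t,q(t)),
\]
which vanishes identically by the action-independent Hamilton--Jacobi equation \eqref{eq:HJ_Sardonian} applied at the point $(t,q(t))$. Hence $F$ is constant on $[0,T]$, and setting $S_0 = F(0) = S(0,\sigma(0))$ yields the claim.

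There is no substantive obstacle: the statement is a direct consequence of the chain rule together with the Hamilton--Jacobi equation. It is worth noting that the hypothesis that $c$ is an integral curve of $X_H^\gamma$ (which would give $\dot{q}^i = \partial H/\partial p_i$ along $\Ima\gamma$) is not strictly used in the computation, since the $p_i\dot{q}^i$ contributions cancel regardless; the only role the lifted curve plays is to ensure that $p_i = \partial S/\partial q^i$ and $z = S$ along it, which is built into the definition of $j_t^1 S\circ \sigma$.
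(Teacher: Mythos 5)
Your proof is correct and follows essentially the same route as the paper: differentiate both sides, cancel the $\frac{\partial S}{\partial q^i}\dot q^i$ terms, and invoke the Hamilton--Jacobi equation \eqref{eq:HJ_Sardonian} to conclude the difference is constant. Your closing observation that the integral-curve hypothesis is not actually used in the computation is accurate and is equally true of the paper's own argument, which likewise only needs that the lifted curve satisfies $p_i=\tparder{S}{q^i}$ and $z=S$.
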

\begin{proof}
Assume that $S\in \Cinfty(\R \times Q)$ is a generating function for $H$. Then,
 \begin{equation}
 \begin{aligned}
          \frac{\dd}{\dd t}  S(t,q(t)) &=  \frac{\partial  S(t,\sigma(t))}{ \partial t} + \frac{\partial  S(t,\sigma(t))}{ \partial q^i} \dot{q}^i(t) \\ &=
            \frac{\partial  S(t,\sigma(t))}{ \partial q^i} \dot{q}^i(t) - H \circ j_t^1 S \circ \sigma(t) \\&=
           \frac{\dd}{\dd t}  \mathcal{A}(j_t^1 S \circ \sigma)(t)\,,
 \end{aligned}
 \end{equation}
 where we have used the Hamilton--Jacobi equation~\eqref{eq:HJ_Sardonian} on the second step, and the definition of the action map \eqref{action_Hamiltonian} on the last step. Hence,
 \begin{equation*}
    S(t,q(t)) = \mathcal{A}(j_t^1 S \circ q)(t) + S_0\,,
 \end{equation*}
for some constant $S_0$.
\end{proof}

\subsection{A new approach for the Hamilton--Jacobi problem in time-independent contact Hamiltonian systems}

Let us recall that a contact Hamiltonian system $(M, \eta, H)$ is a contact manifold $(M, \eta)$ together with a Hamiltonian function $H\colon M \to \R$ (see \cite{Gaset2020a,deLeon2019a}).
The Hamiltonian vector field of $H$ is locally given by
$$ X_H =  \parder{H}{p_i}\parder{}{q^i} - \left(\parder{H}{q^i} + p_i\parder{H}{z}\right)\parder{}{p_i} + \left(p_i\parder{H}{p_i} - H\right)\parder{}{z}\,. $$

The analogous of Theorem \ref{thm_action_indep} for autonomous contact Hamiltonian systems was developed in \cite{deLeon2017} (see also \cite{deLeon2021d}): 

\begin{thm}[Hamilton--Jacobi Theorem for autonomous systems]\label{thm_action_indep_autonomous}
Let $(\cT Q\times \R, \eta, H)$ be a contact Hamiltonian system with contact Hamiltonian vector field $X_H$.
Consider a section $\gamma$ of $\pi_Q:  \cT Q\times \R\to  Q$ such that $\Ima \gamma$ is a Legendrian submanifold of $(\cT Q\times \R, \eta)$. Then, $X_H^\gamma$ and $X_H$ are $\gamma$-related if and only if 
\begin{equation}
    H\circ \gamma = 0\,. \label{eq:HJ_action_indep_autonomous}
\end{equation}
\end{thm}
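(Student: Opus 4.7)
\medskip
\noindent\textbf{Proof plan.} The plan is to mirror the argument of Theorem~\ref{thm_action_indep}, dropping the time variable: since the system is autonomous, the $\partial/\partial t$ terms disappear and the local Hamilton--Jacobi system collapses to the single scalar condition $H\circ\gamma=0$.

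First, I would use the contact analog of Proposition~\ref{proposition_Legendrian_jet} (obtained by suppressing the $t$-factor) to write the Legendrian section as a $1$-jet: writing $\gamma(q)=(q^i,\alpha_i(q),f(q))$ in Darboux coordinates, the pullback $\gamma^*\eta=\d f-\alpha_i\,\d q^i$ vanishes iff $\alpha_i=\partial f/\partial q^i$. Hence $\Ima\gamma$ Legendrian is equivalent to the existence of $S\in\Cinfty(Q)$ with
$$
\gamma(q)=\left(q^i,\ \frac{\partial S}{\partial q^i}(q),\ S(q)\right).
$$
Then I would compute $X_H\circ\gamma$ and $T\gamma\circ X_H^\gamma$ in coordinates exactly as was done prior to equations~\eqref{Sardonian_generalized_HJ_local}, but without the $\partial/\partial t$ contributions. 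Matching components: the $q^i$-equations are tautological; substituting $\gamma_i=\partial S/\partial q^i$, the $z$-equation reduces to $H\circ\gamma=0$; and the $p_i$-equation reduces to $\partial(H\circ j^1 S)/\partial q^i=0$, which is merely the partial derivative in $q^i$ of the $z$-equation and hence automatically satisfied once $H\circ\gamma$ vanishes identically on $Q$. Thus $\gamma$-relatedness is equivalent to the single condition $H\circ\gamma=0$.

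I do not anticipate any serious obstacle, since this is a direct specialization of Theorem~\ref{thm_action_indep}. For a coordinate-free alternative, one observes that $X_H\circ\gamma=T\gamma\circ X_H^\gamma$ is equivalent to $X_H$ being tangent to $\Ima\gamma$ (because $T\gamma$ identifies $TQ$ with $T(\Ima\gamma)$). Assuming $H\circ\gamma=0$, the contact Hamilton equation $\contr{X_H}\eta=-H$ gives $X_H\in\ker\eta$ along $\Ima\gamma$, while $\contr{X_H}\d\eta=\d H-(R_z H)\eta$, combined with $\eta|_{T\Ima\gamma}=0$ and $\d H|_{T\Ima\gamma}=0$ (the latter from $H|_{\Ima\gamma}=0$), yields $\d\eta(X_H,v)=0$ for every $v\in T(\Ima\gamma)$. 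Since $T(\Ima\gamma)$ is Lagrangian in the symplectic vector space $(\ker\eta,\d\eta)$, its $\d\eta$-orthogonal inside $\ker\eta$ coincides with itself, forcing $X_H\in T(\Ima\gamma)$. The converse is immediate: tangency together with the Legendrian condition gives $-H\circ\gamma=\eta(X_H)\circ\gamma=0$.
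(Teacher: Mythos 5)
Your proof is correct and matches the paper's method: the paper does not actually spell out a proof of Theorem~\ref{thm_action_indep_autonomous} (it defers to \cite{deLeon2017,deLeon2021d}), but your coordinate argument is exactly the specialization of the computation leading to Theorem~\ref{thm_action_indep} with the $\partial/\partial t$ contributions dropped, and your observation that the $p_i$-equations are the $q^i$-derivatives of the identity $H\circ j^1S=0$ correctly explains why the single scalar condition is equivalent to $\gamma$-relatedness. The coordinate-free alternative (tangency of $X_H$ to $\Ima\gamma$ via the Lagrangian property of $\T(\Ima\gamma)$ inside the symplectic bundle $(\ker\eta,\d\eta)$) is also sound and arguably cleaner.
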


The problem with this approach is that it cannot be used to completely integrate the system. Indeed, equation~\eqref{eq:HJ_action_indep_autonomous} implies that every integral curve of $X_H\circ \gamma$ is contained in $H^{-1}(0)$.
This can be solved by regarding the contact Hamiltonian system $(\cT Q\times \R, \eta, H)$ as the cocontact Hamiltonian system $(\R \times \cT Q\times \R, \d t, \eta, {\widehat H})$, where {$\widehat H= H \circ \rho_2$ (i.e. $\widehat H (t, q, p, z) = H(q,p, z)$),}  such that $\Rt({\widehat H})=0$ and making use of Theorem \ref{thm_action_indep}.
Suppose that $S$ is of the form $S(t, q) = \alpha(q) + \beta(t)$. Then, equation~\eqref{eq:HJ_Sardonian} yields
\begin{equation}
    H \circ j^1 \alpha + \frac{\partial\beta}{\partial t} = 0\, ,
\end{equation}
that is,
\begin{equation}
    H \left(q^i, \frac{\partial \alpha}{\partial q^i}, z\right) + \dot \beta(t) = 0\, .
    \label{HJ_contact_as_cocontact}
\end{equation}
With a suitable choice of $\alpha$ and $\beta$, one can cover energy levels distinct from $H=0$.

\begin{dfn}\label{def:complete_sols_indep_autonomous}
Let $(\T^*Q\times \R, \eta, H)$ be a contact Hamiltonian system, and let $(\R\times \cT Q\times \R, \tau, \eta, \widehat H = H\circ \rho_2)$ be its associated cocontact Hamiltonian system.
A \emph{complete solution of the action-independent Hamilton--Jacobi problem} for $(\T^*Q\times \R, \eta, H)$ is a map $\widehat\Phi\colon \R\times Q \times \R^{n} \to \R \times \cT Q \times \R$ such that $\Phi = \rho_2\circ \widehat \Phi$ is a local diffeomorphism and, for each $\lambda \in \R^{n}$,
\begin{equation}
\begin{aligned}
    \widehat \Phi_\lambda \colon \R\times Q &\to \R \times\cT Q \times \R  \\
    \left(t, q^i\right) &\mapsto \widehat\Phi\left(t, q^i, \lambda\right) 
\end{aligned}    
\end{equation}
is a solution of the action-independent Hamilton--Jacobi problem for $(\R\times \cT Q\times \R, \tau, \eta, \widehat H)$.
\end{dfn}
Let $\alpha\colon \R\times Q\times \R^{n} \rightarrow \R^{n}$, and $\pi_i\colon \R^{n} \to \R$ denote the canonical projections. One can define the $n$ functions $f_i=\pi_i\circ \alpha\circ \Phi^{-1}$ on $\R \times \cT Q \times \R$, so that the following diagram commutes:
\begin{equation}
        \begin{tikzcd}
        \R \times Q \times \R^{n} \arrow[r, "\Phi", shift left] \arrow[d, "\alpha"] &  \cT Q \times \R \arrow[l, "\Phi^{-1}", shift left] \arrow[d, "f_i"] \\
        \R^{n} \arrow[r, "\pi_i"] & \R
    \end{tikzcd}
\end{equation}
Then,
\begin{equation}
    \Ima \Phi_\lambda = \bigcap_{i=1}^n f_i^{-1} (\lambda_i)\, ,
\end{equation}
where $\Phi_\lambda(t,q) = \Phi(t,q,\lambda)$,
and
\begin{equation}
    \Ima \widehat \Phi_\lambda = \bigcap_{i=1}^n (f_i\circ \rho_2)^{-1} (\lambda_i)\, ,
\end{equation}
so the functions $f_i\circ \rho_2$ are constants of the motion for $\widehat H$, and thus the functions $f_i$ are constants of the motion for $H$.
\begin{exmpl}[The free particle with a linear external force]
    Consider the cocontact Hamiltonian system $(\R\times \cT Q \times \R, \d t, \eta, H)$, where
    \begin{equation}
        H = \frac{p^2}{2} - \kappa z,
    \end{equation}
    with $\kappa$ a positive constant. Let $\widehat H = H \circ \rho_2$ be the associated time-dependent Hamiltonian.
    Then, the action-independent Hamilton--Jacobi equation for $\widehat H$ is given by 
    \begin{equation}
        \frac{1}{2}\left( \frac{\partial S}{\partial q}\right)^2 - \kappa S(t,q) + \frac{\partial S}{\partial t} = 0\, ,
    \end{equation}
    that is,
     \begin{equation}
       \left( \frac{\partial S}{\partial q}\right)^2 - 2 \kappa S(t,q) + 2 \frac{\partial S}{\partial t} = 0\, ,
        \label{HJ_example_indep_auton}
    \end{equation}
    Suppose that the generating function $S$ is separable, namely, $S(t,q)= \alpha(t)+\beta(q)$. Then, equation~\eqref{HJ_example_indep_auton} can be written as
    \begin{equation}
         \left( \frac{\d \beta}{\d q}\right)^2 - 2 \kappa \alpha(t) - 2 \kappa \beta(q) + 2\frac{\d \alpha}{\d t} = 0\, ,
    \end{equation}
    so
   \begin{align}
       &  2\frac{\d \alpha}{\d t} -2 \kappa \alpha(t) = 0\,,\\
       & \left( \frac{\d \beta}{\d q}\right)^2 - 2 \kappa \beta(q) = 0.
   \end{align}
   Thus,
   \begin{equation}
       \alpha(t) = e^{\kappa t}\,, \qquad
       \beta_{\lambda}(q) = \left( \sqrt{\frac{\kappa}{2}} q + \lambda \right )^2,
   \end{equation}
   that is,
   \begin{equation}
       S_\lambda(t,q) = e^{\kappa t} + \left( \sqrt{\frac{\kappa}{2}} q + \lambda \right )^2,
   \end{equation}
   and 
   $$\widehat\Phi(t,q,\lambda) = j_t^1 S_\lambda(t,q) = \left(t, q, \sqrt{2\kappa}\left( \sqrt{\frac{\kappa}{2}} q + \lambda \right),  e^{\kappa t} + \left( \sqrt{\frac{\kappa}{2}} q + \lambda \right )^2\right)$$ 
   is a complete solution. 
   Then,
   $$\Phi\colon (t, q,\lambda) \mapsto \left(q, \sqrt{2\kappa}\left( \sqrt{\frac{\kappa}{2}} q + \lambda \right),  e^{\kappa t} + \left( \sqrt{\frac{\kappa}{2}} q + \lambda \right )^2\right)\, ,$$ 
   whose inverse is given by
   \begin{equation}
       \Phi^{-1} \colon (q, p, z) \mapsto
       \left(\frac{1}{\kappa}\log \abs{z-\frac{p^2}{2\kappa}}, q, 
       \frac{p-\kappa q}{\sqrt{2\kappa}}
       \right)\, .
   \end{equation}
   Hence,
   \begin{equation}
       f_1(t,q,p,z) =  \frac{p-\kappa q}{\sqrt{2\kappa}}
   \end{equation}
   is a conserved quantity.

  {The Hamiltonian vector field of $H$ is given by
   \begin{equation}
       X_H  = p\parder{}{q} + \kappa p \parder{}{p} + \left(\frac{p^2}{2} + \kappa z\right)\parder{}{z}\,.
   \end{equation}
   One can check that $X_H(f_1)=0$. Moreover,
    \begin{equation}
       X_H^\gamma
     = \restr{ p\parder{}{q}}{\Ima \Phi_\lambda}
     = \sqrt{2\kappa}\left( \sqrt{\frac{\kappa}{2}} q + \lambda \right)\parder{}{q}
     \,,
   \end{equation}
   whose integral curves $\sigma(t)=(t, q(t))$ are given by
   \begin{equation}
       q(t) = c e^{\kappa  t} -\sqrt{\frac{2}{\kappa}}\lambda\, ,
   \end{equation}
   where $c$ is a constant.
   Then, the integral curves of $X_H$ along $\Ima \Phi_\lambda$ are given by $\Phi_\lambda \circ \sigma(t) = (q(t), p(t), z(t))$, where
   \begin{equation}
       p(t) = \sqrt{2\kappa}\left( \sqrt{\frac{\kappa}{2}} q(t) + \lambda \right)
            = \kappa\, c e^{\kappa  t}\textbf{}
       \, ,
   \end{equation}
   and
   \begin{equation}
       z(t) =  e^{\kappa t} + \left( \sqrt{\frac{\kappa}{2}} q(t) + \lambda \right)^2
      =  e^{\kappa t} +  \frac{\kappa}{2} c^2 e^{2\kappa  t} \, .
   \end{equation}
    }
\end{exmpl}

\section{The action-dependent approach}
\label{section_HJ_action_dep}

\subsection{Hamilton--Jacobi theory. The action-dependent approach}

In Section \ref{section_HJ_action_indep} we have introduced a Hamilton--Jacobi theory for time-dependent contact Hamiltonian systems. In particular, this approach was shown to be useful to deal with time-independent contact Hamiltonian systems, where time is used as a free parameter. Nevertheless, this approach has a couple of drawbacks. First, complete solutions depend on $n+1$ parameters, instead of the $n$ parameters that are required for symplectic Hamiltonian systems \cite{Carinena2006}.
Additionally, time-independent solutions only cover the zero-energy level. 

In order to solve these problems, in this section we propose an alternative approach, considering solutions of the Hamilton--Jacobi problem depending on the action variable $z$.
Let us consider a section $\gamma$ of the bundle $\pi_Q^{t,z}: \R\times \T^*Q\times \R\to \R\times Q \times \R$, locally given by
  \begin{equation}
    \begin{aligned}
        \gamma: \R\times Q\times \R &\longrightarrow  \R\times \T^*Q\times \R\\
        (t, x, z) &\longmapsto \left(t, x, \gamma_i(t,x,z), z\right) \,.
    \end{aligned}
  \end{equation}
As in the previous approach, assume that $X_H^\gamma$ and $X_H$ are $\gamma$-related, so that the following diagram commutes:  
\begin{center}
    \begin{tikzcd}
        \R \times \cT Q \times \R \arrow[d, "\pi_Q^{t,z}"] \arrow[r, "X_H"]                           & \T(\R \times \cT Q\times \R) \arrow[d, swap, "\T\pi_Q^{t,z}"]                       \\
        \R \times Q \times \R \arrow[r, "X_H^{\gamma}"] \arrow[u, "\gamma", bend left] & \T(\R \times Q \times \R) \arrow[u, swap, "\T \gamma", bend right]
    \end{tikzcd}
\end{center}

Locally,
\begin{equation}
    X_H \circ \gamma = \frac{\partial}{\partial t} 
    + \frac{\partial H}{\partial p_i} \frac{\partial }{\partial q^i}
    - \left( \frac{\partial H }{\partial q^i} + \gamma_i \frac{\partial H}{\partial z} \right) \frac{\partial }{\partial p_i}
    + \left(\gamma_i \frac{\partial H}{\partial p_i} -H\right) \frac{\partial }{\partial z}\,,
\end{equation}
and
\begin{equation}
   \T\gamma \circ X_H^\gamma = \frac{\partial}{\partial t} 
    + \frac{\partial H}{\partial p_i} \frac{\partial }{\partial q^i}
    + \left(\frac{\partial \gamma_i} {\partial t} + \frac{\partial H} {\partial p_j} \frac{\partial \gamma_i} {\partial q^j} 
    + \left(\gamma_j \frac{\partial H} {\partial p_j} - H  \right)\frac{\partial \gamma_i} {\partial z}  \right) \frac{\partial  } {\partial p_i}
    + \left(\gamma_i \frac{\partial H}{\partial p_i} -H\right) \frac{\partial }{\partial z}\, ,
\end{equation}
so $X_H^\gamma$ and $X_H$ are $\gamma$-related if and only if
\begin{equation}
   - \left( \frac{\partial H }{\partial q^i} + \gamma_i \frac{\partial H}{\partial z} \right)
   = \frac{\partial \gamma_i} {\partial t} + \frac{\partial H} {\partial p_j} \frac{\partial \gamma_i} {\partial q^j} 
    + \frac{\partial \gamma_i} {\partial z} \left(\gamma_j \frac{\partial H} {\partial p_j} - H  \right) \, .
    \label{pre_HJ_Lainzian}
\end{equation}


Note that $\Ima \gamma $ is $(n + 2)$-dimensional, so it no longer makes sense to require it to be Legendrian \cite{deLeon2022d}. We will require it to be coisotropic instead.


Let us recall that, given a Jacobi manifold $(M, \Lambda, E)$ and a distribution $\mathcal D\subseteq \T M$, the \emph{orthogonal complement} $\mathcal{D}^\perp$ of $\mathcal D$ is given by \cite{deLeon2021a,Libermann1987}
     $$ \mathcal{D}_x^\perp = \hat\Lambda\left(\mathcal{D}_x^\circ\right)\,, $$
where $\mathcal{D}_x^\circ = \left\{\alpha_x \in \cT_x M \mid \alpha_x(v)=0,\ \forall\ v \in \mathcal{D}_x\right\}$ denotes the annihilator. In particular, a cocontact manifold $(M,\tau,\eta)$ is a Jacobi manifold (see Proposition \ref{prop_Jacobi_mfold}) and its morphism $\hat \Lambda$ is given by equation~\eqref{Lambda_hat_cocontact}. A submanifold $N\hookrightarrow M$ is said to be \emph{coisotropic} if $\T N^\perp \subseteq \T N$.

\begin{dfn}
   Given a section $\alpha: \R \times Q \times \mathbb{R} \to \R \times \bigwedge\nolimits^k\cT Q \times \mathbb{R}$ and $t, z \in \mathbb{R}$, let 
   \begin{equation}
       \begin{aligned}
           \alpha_{(t,z)} :  Q &\longrightarrow \bigwedge\nolimits^k\cT Q\\
           x &\longmapsto \pr_{\Lambda^k \cT Q} (\alpha(t,x,z))\, ,
       \end{aligned}
   \end{equation}
   where $\pr_{\bigwedge\nolimits^k \cT Q}: \R \times \bigwedge\nolimits^k\cT Q \times \mathbb{R} \to \bigwedge\nolimits^k\cT Q$ is the canonical projection. The \emph{exterior derivative of $\alpha$ at fixed $t$ and $z$} is the section of $\R \times \bigwedge\nolimits^{k+1}\cT Q \times \mathbb{R} \to \R \times Q \times \mathbb{R}$ given by
   \begin{equation}
       \d_Q \alpha(t, x, z) = (t, \d \alpha_{(t,z)} (x), z)\,.
   \end{equation}
\end{dfn}

The coisotropic condition can be written in local coordinates as follows.

\begin{lem}
    Assume that an $(n+2)$-dimensional submanifold $N$ of a $(2n+2)$-dimensional cocontact manifold $(M, \tau, \eta)$ is locally the zero set of the constraint functions $\left\{\phi_a\right\}_{a=1,\dotsc,n}$.
    Then, $N$ is coisotropic if and only if the following equation holds in Darboux coordinates:
     \begin{equation}\label{eq:coisotropic_coords}
        \left(\parder{\phi_a}{q^i} + p_i \parder{\phi_a}{z}\right)\parder{\phi_b}{p_i} - 
         \left(\parder{\phi_b}{q^i} + p_i\parder{\phi_b}{z}\right)\parder{\phi_a}{p_i} = 0\, .
    \end{equation}
\end{lem}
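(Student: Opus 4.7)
The plan is to reduce the condition $\T N^\perp \subseteq \T N$ to an equation on the constraint functions, and then compute it in Darboux coordinates using the explicit local expression of the Jacobi bivector $\Lambda$ given earlier in the paper.

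First, since the $\phi_a$ are independent defining functions for $N$, their differentials $\d\phi_1,\dots,\d\phi_n$ pointwise span the annihilator $\T N^\circ$. Therefore, by the definition of the orthogonal complement, $\T N^\perp$ is spanned at each point of $N$ by the vectors $\hat\Lambda(\d\phi_a)$ for $a=1,\dots,n$. The inclusion $\T N^\perp \subseteq \T N$ is then equivalent to requiring that each $\hat\Lambda(\d\phi_a)$ be tangent to $N$, which by the defining equations is equivalent to
\begin{equation}
    \d\phi_b\bigl(\hat\Lambda(\d\phi_a)\bigr) = 0 \quad \text{on } N, \quad \text{for all } a,b\in\{1,\dotsc,n\}\,.
\end{equation}
By definition of $\hat\Lambda$, the left-hand side equals $\Lambda(\d\phi_a,\d\phi_b)$.

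Next, I would substitute the Darboux expression of $\Lambda$ recalled in the preceding paragraphs,
\begin{equation}
    \Lambda = \parder{}{q^i}\wedge\parder{}{p_i} - p_i\parder{}{p_i}\wedge\parder{}{z}\,,
\end{equation}
into the pairing $\Lambda(\d\phi_a,\d\phi_b)$ and expand. This yields
\begin{equation}
    \Lambda(\d\phi_a,\d\phi_b) = \parder{\phi_a}{q^i}\parder{\phi_b}{p_i} - \parder{\phi_b}{q^i}\parder{\phi_a}{p_i} - p_i\left(\parder{\phi_a}{p_i}\parder{\phi_b}{z} - \parder{\phi_b}{p_i}\parder{\phi_a}{z}\right)\,,
\end{equation}
which, after grouping the terms proportional to $\partial\phi_b/\partial p_i$ and $\partial\phi_a/\partial p_i$, matches precisely the expression in \eqref{eq:coisotropic_coords}. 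Combining the two steps gives the stated equivalence.

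There is no real obstacle here: the whole argument is a chain of unwindings of definitions, and the only computation is the expansion of the bivector on a pair of exact one-forms. The one point that deserves care is to note that the equality must be read on $N$, and that the differentials $\d\phi_a$ genuinely provide a local frame of $\T N^\circ$, which follows from the submanifold assumption that the $\phi_a$ are functionally independent along $N$.
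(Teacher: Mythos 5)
Your proof is correct and follows essentially the same route as the paper: both reduce coisotropy to the vanishing of $\Lambda(\d\phi_a,\d\phi_b)$ on $N$ via the generators $\hat\Lambda(\d\phi_a)$ of $\T N^\perp$, and then compute in Darboux coordinates (the paper writes out the vector fields $Z_a=\hat\Lambda(\d\phi_a)$ first and applies them to $\phi_b$, while you expand the bivector on the pair of exact one-forms directly — the same calculation in a different order). No gaps.
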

\begin{proof}
     Assume that $(M,\tau, \eta)$ is a $(2n+2)$-dimensional cocontact manifold. Let $N\hookrightarrow M$ be a $k$-dimensional submanifold locally given as the zero set of functions $\phi_a:U\to \R$, with $a\in \{1, \ldots, 2n+2-k\}$. We have that
    \begin{equation*}
        {\T N}^{\perp} = \left\langle\left\{Z_a\right\}_{a=1,\dotsc,2n+2-k}\right\rangle\, ,
    \end{equation*}
    where
    \begin{equation*}
        Z_a = \hat \Lambda(\d \phi_a)
        = \left(\parder{\phi_a}{q^i} + p_i \parder{\phi_a}{z}\right)\parder{}{p_i} - 
        \parder{\phi_a}{p_i} \left(\parder{}{q^i} + p_i\parder{}{z}\right)\,.
    \end{equation*}    
    Therefore, $N$ is coisotropic if and only if $Z_a(\phi_b)=0$ for all  $a,b$, which in Darboux coordinates yields 
    equation~\eqref{eq:coisotropic_coords}.
    
\end{proof}

\begin{prop}
    Let $\gamma$ be a section of $\R\times \cT Q \times \mathbb{R}$ over $\R \times Q \times \mathbb{R}$. Then $\Ima \gamma$ is a coisotropic submanifold if and only if
    \begin{equation}
        \frac{\partial \gamma_i} {\partial q^j} + \gamma_j \frac{\partial \gamma_i} {\partial z} 
        = \frac{\partial \gamma_j} {\partial q^i} + \gamma_i \frac{\partial \gamma_j} {\partial z}.
        \label{eq_coisotropic_coords}
    \end{equation}
\end{prop}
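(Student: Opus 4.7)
My plan is to apply the preceding lemma directly: since $\Ima\gamma$ is an $(n+2)$-dimensional submanifold of the $(2n+2)$-dimensional cocontact manifold $\R\times\cT Q\times\R$, I just need $n$ local defining functions and to evaluate the coisotropic condition \eqref{eq:coisotropic_coords} on them.

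First, I would write the image of $\gamma$ as the common zero set of the $n$ constraint functions
\begin{equation}
    \phi_a(t, q^i, p_i, z) = p_a - \gamma_a(t, q^i, z)\, ,\qquad a = 1, \ldots, n\, ,
\end{equation}
which are clearly independent since $\partial \phi_a/\partial p_i = \delta_a^i$. Then I compute the partial derivatives entering \eqref{eq:coisotropic_coords}:
\begin{equation}
    \parder{\phi_a}{q^i} = -\parder{\gamma_a}{q^i}\, ,\qquad \parder{\phi_a}{p_i} = \delta_a^i\, ,\qquad \parder{\phi_a}{z} = -\parder{\gamma_a}{z}\, .
\end{equation}

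Substituting these into \eqref{eq:coisotropic_coords} and summing over $i$, the Kronecker deltas collapse the sums, giving
\begin{equation}
    \left(-\parder{\gamma_a}{q^b} - p_b\parder{\gamma_a}{z}\right) - \left(-\parder{\gamma_b}{q^a} - p_a\parder{\gamma_b}{z}\right) = 0\, .
\end{equation}
Since this identity is only required to hold on the submanifold $\Ima\gamma$, where $p_i = \gamma_i$, it becomes
\begin{equation}
    \parder{\gamma_b}{q^a} + \gamma_a\parder{\gamma_b}{z} = \parder{\gamma_a}{q^b} + \gamma_b\parder{\gamma_a}{z}\, ,
\end{equation}
which, after relabeling $(a,b)\leftrightarrow (j,i)$, is exactly \eqref{eq_coisotropic_coords}.

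There is no real obstacle here; the only subtlety is remembering that the coisotropic condition from the lemma need only hold pointwise on $N = \Ima\gamma$, which justifies replacing $p_i$ by $\gamma_i$ in the final step. Variables $t$ and $z$ play no role in the constraint computation since $\phi_a$ does not involve $dt$ or $dz$ derivatives that would bring in additional terms beyond those already written in \eqref{eq:coisotropic_coords}.
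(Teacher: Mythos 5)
Your proposal is correct and follows exactly the paper's route: both apply the preceding lemma to $N = \Ima\gamma$ with the constraints $\phi_i = p_i - \gamma_i$; you merely spell out the substitution and the restriction $p_i = \gamma_i$ that the paper leaves implicit. The computation checks out, including the index relabeling at the end.
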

\begin{proof}
    Equation~\eqref{eq_coisotropic_coords} is obtained by applying the previous result to the submanifold $N = \Ima \gamma$, which is locally defined by the constraints $\phi_i = p_i - \gamma_i$.
\end{proof}



    


Now suppose that the $\gamma$ appearing in equation~\eqref{pre_HJ_Lainzian} is such that $\Ima \gamma$ is coisotropic. Then, by means of equation~\eqref{eq_coisotropic_coords} we obtain
\begin{equation}
    \frac{\partial H }{\partial q^i}  + \frac{\partial H} {\partial p_j} \frac{\partial \gamma_j} {\partial q^i} 
    + \gamma_i \left(\frac{\partial H} {\partial p_j} \frac{\partial \gamma_j} {\partial z}  +\frac{\partial H}{\partial z} \right)
     + \frac{\partial \gamma_i} {\partial t}
    = H \frac{\partial \gamma_i} {\partial z} \, ,
\end{equation}
or, globally,
\begin{equation}
   \d_Q \left(H\circ \gamma  \right) + \frac{\partial}{\partial z} (H\circ \gamma) \gamma + \liedv{\Rt} \gamma = (H\circ \gamma) \liedv{\frac{\partial}{\partial z}} \gamma\, .
   \label{eq:HJ_Lainzian}
\end{equation}

\begin{thm}[Action-dependent Hamilton--Jacobi Theorem]
Let $\gamma$ be a section of $\pi_Q^{t,z}: \R\times \T^*Q\times \R\to \R\times Q\times \R$ such that $\Ima \gamma$ is a {coisotropic} submanifold of $(\R\times \T^*Q\times \R, \tau, \eta)$. Then, $X_H^\gamma$ and $X_H$ are $\gamma$-related if and only if equation~\eqref{eq:HJ_Lainzian} holds. This equation will be called the \emph{action-dependent Hamilton--Jacobi} equation for $(\R\times \T^*Q\times \R, \tau, \eta, H)$.
\end{thm}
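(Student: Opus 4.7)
The plan is to build on the coordinate computations already carried out before the statement. Those calculations establish, by comparing the local expressions of $X_H\circ\gamma$ and $\T\gamma\circ X_H^\gamma$ component by component, that $X_H$ and $X_H^\gamma$ are $\gamma$-related if and only if the $n$ scalar equations in \eqref{pre_HJ_Lainzian} hold (the $\partial_t$, $\partial_{q^i}$, and $\partial_z$ components match automatically, leaving only the $\partial_{p_i}$ components as a genuine constraint). Meanwhile, the coisotropy of $\Ima\gamma$ has just been shown to be equivalent to the symmetry relations \eqref{eq_coisotropic_coords}. So the whole task reduces to verifying that, under the coisotropic identity, the equation \eqref{pre_HJ_Lainzian} is equivalent to the coordinate form of \eqref{eq:HJ_Lainzian}.

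For the manipulation itself I would isolate $\partial\gamma_i/\partial q^j$ in the coisotropy identity \eqref{eq_coisotropic_coords},
\[
\frac{\partial\gamma_i}{\partial q^j} = \frac{\partial\gamma_j}{\partial q^i} + \gamma_i\frac{\partial\gamma_j}{\partial z} - \gamma_j\frac{\partial\gamma_i}{\partial z},
\]
and substitute into the term $(\partial H/\partial p_j)(\partial\gamma_i/\partial q^j)$ on the right-hand side of \eqref{pre_HJ_Lainzian}. The two resulting contributions proportional to $\gamma_j(\partial H/\partial p_j)(\partial\gamma_i/\partial z)$ cancel, and after rearranging I obtain
\[
\frac{\partial H}{\partial q^i} + \frac{\partial H}{\partial p_j}\frac{\partial\gamma_j}{\partial q^i} + \gamma_i\left(\frac{\partial H}{\partial p_j}\frac{\partial\gamma_j}{\partial z} + \frac{\partial H}{\partial z}\right) + \frac{\partial\gamma_i}{\partial t} = H\,\frac{\partial\gamma_i}{\partial z},
\]
which is exactly the local form of \eqref{eq:HJ_Lainzian}. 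Since both steps are equivalences, this gives the desired \textquotedblleft if and only if\textquotedblright.

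Finally, I would verify that each group of coordinate terms is the local expression of the corresponding intrinsic object. By the chain rule applied to $(H\circ\gamma)(t,q,z) = H(t,q,\gamma(t,q,z),z)$, the $i$-th component of $\d_Q(H\circ\gamma)$ is $\partial H/\partial q^i + (\partial H/\partial p_j)(\partial\gamma_j/\partial q^i)$, and $\partial(H\circ\gamma)/\partial z = \partial H/\partial z + (\partial H/\partial p_j)(\partial\gamma_j/\partial z)$. Using the Darboux form $\Rt = \partial/\partial t$ from Theorem \ref{thm:Darboux-cocontact}, the remaining terms $\partial\gamma_i/\partial t$ and $H\,\partial\gamma_i/\partial z$ are identified with the components of $\Lie_{\Rt}\gamma$ and $(H\circ\gamma)\Lie_{\partial/\partial z}\gamma$ respectively.

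The calculation is essentially direct; the only delicate point is the bookkeeping of the chain rule, so that partial derivatives of $H$ evaluated along $\gamma$ are correctly reassembled into intrinsic expressions involving $H\circ\gamma$. No structural obstacle is anticipated, since the substantive work—namely the local characterisation of $\gamma$-relatedness and of the coisotropic condition—has already been done in the paragraphs preceding the statement.
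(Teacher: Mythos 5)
Your proposal is correct and follows essentially the same route as the paper: the theorem's proof is precisely the coordinate comparison of $X_H\circ\gamma$ and $\T\gamma\circ X_H^\gamma$ yielding \eqref{pre_HJ_Lainzian}, followed by substitution of the coisotropy identity \eqref{eq_coisotropic_coords} to arrive at the local form of \eqref{eq:HJ_Lainzian}. Your algebraic manipulation and the identification of the coordinate expressions with the intrinsic objects $\d_Q(H\circ\gamma)$, $\Lie_{\Rt}\gamma$ and $(H\circ\gamma)\Lie_{\partial/\partial z}\gamma$ match the paper's derivation exactly.
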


\begin{dfn}
Let $(\R\times \T^*Q\times \R, \tau, \eta, H)$ be a cocontact Hamiltonian system.
A \emph{complete solution of the action-dependent Hamilton--Jacobi problem} for $(\R\times \T^*Q\times \R, \tau, \eta, H)$ is a local diffeomorphism $\Phi\colon \R\times Q \times \R^n \times \R\to \R \times \cT Q \times \R$ such that, for each $\lambda \in \R^n$,
\begin{equation}
\begin{aligned}
    \Phi_\lambda \colon \R\times Q \times \R&\longrightarrow \R \times\cT Q \times \R  \\
    \left(t, q^i, z\right) &\longmapsto \Phi\left(t, q^i, \lambda, z\right) 
\end{aligned}    
\end{equation}
is a solution of the action-dependent Hamilton--Jacobi problem for $(\R\times \T^*Q\times \R, \tau, \eta, H)$.
\end{dfn}

Let $\alpha\colon \R\times Q\times \mathbb{R}^n\times \R\rightarrow \mathbb{R}^n$, and $\pi_i\colon \R^n \to \R$ denote the canonical projections. Let us define the functions $f_i=\pi_i\circ \alpha\circ \Phi^{-1}$ on $\R \times \cT Q \times \R$, so that the following diagram commutes:
\[
    \begin{tikzcd}
        \R \times Q \times \R^n \times \R \arrow[r, "\Phi", shift left] \arrow[d, "\alpha"] & \R\times \cT Q \times \R \arrow[l, "\Phi^{-1}", shift left] \arrow[d, "f_i"] \\
        \R^n \arrow[r, "\pi_i"] & \R
    \end{tikzcd}
\]

\begin{thm}\label{thm:complete_solution_involution}
   Let $\Phi\colon \R\times Q \times \R^n \times \R\to \R \times \cT Q \times \R$ be a complete solution of the action-dependent Hamilton--Jacobi problem for $(\R\times \T^*Q\times \R, \tau, \eta, H)$. Then, 
    \begin{enumerate}[{\rm (i)}]
        \item For each $i\in \{1, \dotsc, n\}$, the function $f_i=\pi_i\circ \alpha\circ \Phi^{-1}$ is a constant of the motion. However, these functions are not necessarily in involution, i.e., $\{f_i, f_j\}\neq 0$.
        \item For each $i\in \{1, \dotsc, n\}$, the function $\hat f_i= g f_i$, where $g$ is a dissipated quantity, is also a dissipated quantity. Moreover, if $\Rt H = 0$ and taking $g=H$, these functions are in involution, i.e., $\{\hat f_i, \hat f_j\}= 0$.
    \end{enumerate}
\end{thm}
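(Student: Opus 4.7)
The plan is to mirror the proof of Theorem \ref{thm:complete_solution_involution_indep} essentially line-by-line, since the only structural change in passing from the action-independent to the action-dependent setting is that the level sets $\Ima\Phi_\lambda$ are now $(n+2)$-dimensional coisotropic submanifolds rather than $(n+1)$-dimensional Legendrian ones, while the decisive property — that $X_H$ is tangent to each of them — still holds by the Hamilton--Jacobi condition (that is, $X_H$ is $\gamma$-related to the projected field on $\R\times Q\times\R$ for each $\lambda$).

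For part (i), I would write
\begin{equation*}
    \Ima\Phi_\lambda = \bigcap_{i=1}^{n} f_i^{-1}(\lambda_i),
\end{equation*}
where the $f_i$ are defined by the commutative square following Definition 5.3. Tangency of $X_H$ to every $\Ima\Phi_\lambda$ gives $X_H(f_i)=0$, establishing that each $f_i$ is a constant of motion. To exhibit the failure of involution I would compute the Jacobi bracket directly from its definition, using equation \eqref{Lambda_hat_cocontact} together with $\contr{X_{f_j}}\d\eta = \d f_j - \Rz(f_j)\eta - \Rt(f_j)\tau$, arriving at
\begin{equation*}
    \{f_i,f_j\} = X_{f_j}(f_i) - \Rt(f_i) - f_i\,\Rz(f_j),
\end{equation*}
exactly as in the corresponding step of the previous proof; the $\Rz$-term in general prevents the $f_i$ from Poisson-commuting in the Jacobi sense.

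For part (ii), the first assertion is an immediate instance of the second item of Proposition \ref{prop:conserved-dissipated}. For the involution identity $\{\hat f_i,\hat f_j\}=\{Hf_i,Hf_j\}=0$ under the hypotheses $\Rt(H)=0$ and $g=H$, I would expand the bracket by repeated use of the Leibniz-type identity for the Jacobi bracket,
\begin{equation*}
    \{fg,h\} = f\{g,h\} + g\{f,h\} - fg\,E(h),
\end{equation*}
then invoke $X_H(f_i) = X_H(f_j) = 0$ from part (i), together with the fact that under $\Rt(H)=0$ the Hamiltonian itself is dissipated, and rewrite each instance of $\{H,f_i\}$ and $\{H,f_j\}$ by means of the characterization $\{f,H\} = \Rt(f)$ for dissipated $f$ established just before Noether's theorem. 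The terms then cancel in exactly the same pattern as in equation \eqref{eq_involution_dissipated} of the preceding proof, yielding the conclusion.

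The only real obstacle worth noting is bookkeeping: the Jacobi bracket is not a derivation, so the Leibniz identity carries the extra correction $-fg\,E(h)$, and one must track the signs carefully when peeling off $H$ from both entries simultaneously. Since this bookkeeping was already carried out in \eqref{eq_involution_dissipated} for the action-independent case, no new idea is required and the calculation is essentially mechanical once the two ingredients — the Leibniz identity for Jacobi brackets and the dissipation characterization $\{f,H\} = \Rt(f)$ — are both in hand.
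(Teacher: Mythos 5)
Your proposal is correct and follows the paper's own proof almost step for step: the tangency argument for part (i), the citation of Proposition \ref{prop:conserved-dissipated} for the first half of part (ii), and the reuse of the cancellation \eqref{eq_involution_dissipated} for the involution claim are all exactly what the paper does. The one place you diverge is the bracket computation in part (i): the paper exploits the coisotropy of $\Ima\Phi_\lambda$ — namely $(\T\Ima\Phi_\lambda)^\perp=\hat\Lambda\big((\T\Ima\Phi_\lambda)^\circ\big)\subset\T\Ima\Phi_\lambda$ — to conclude $\Lambda(\d f_i,\d f_j)=\hat\Lambda(\d f_i)(f_j)=0$ and hence the sharper identity $\{f_i,f_j\}=-f_i\Rz(f_j)+f_j\Rz(f_i)$, whereas you simply transplant the formula $\{f_i,f_j\}=X_{f_j}(f_i)-\Rt(f_i)-f_i\Rz(f_j)$ from the action-independent proof. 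Since the statement only asserts that the $f_i$ are \emph{not necessarily} in involution, either expression suffices, but the paper's route is worth adopting: it is the one step where the coisotropy hypothesis of the action-dependent setting actually enters, and it isolates the obstruction to involution as purely the $\Rz$-terms.
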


\begin{proof}
 Observe that
$$
\Ima \Phi_\lambda = \bigcap_{i=1}^n \, f_i^{-1}(\lambda_i)\,,
$$
where $\lambda = (\lambda, \dotsc, \lambda_n) \in \mathbb{R}^n$. In other words,
$$
\Ima \Phi_\lambda = \{ x \in \R\times \cT Q \times \R \mid f_i(x) = \lambda_i, i=1, \dotsc, n\}\,.
$$
Therefore, since $X_H$ is tangent to any of the submanifolds $\Ima \Phi_\lambda$, we deduce that
$$
X_H (f_i) = 0\,.
$$
Moreover, we can compute
$$
\{f_i, f_j\} = \Lambda (\d f_i, \d f_j) - f_i \Rz (f_j) + f_j \Rz (f_i)\,,
$$
but
$$
\Lambda (\d f_i, \d f_j) = \hat\Lambda(\d f_i)(f_j) = 0\,,
$$
since $(\T \Ima \Phi_\lambda)^\perp = \hat\Lambda ((\T \Ima \Phi_\lambda)^\circ) \subset \T \Ima \Phi_\lambda$, so 
\begin{equation}
\{f_i, f_j\} = - f_i \Rz(f_j) + f_j \Rz(f_i)\,.
\end{equation}

By Proposition \ref{prop:conserved-dissipated}, we already know that the product of a conserved quantity and a dissipated quantity is a dissipated quantity. Let $f_i$ and $f_j$ be conserved quantities and take $g = H$. Then, by equation~\eqref{eq_involution_dissipated}, $\{\hat f_i, \hat f_j \}$ vanishes.

\end{proof}

From a complete solution of the Hamilton--Jacobi problem one can reconstruct the dynamics of the system. If $\sigma$ is an integral curve of the vector field $X_H^\gamma$, then $\Phi_\lambda \circ \sigma$ is an integral curve of $X_H$, thus recovering the dynamics of the original system.

\subsection{Integrable contact Hamiltonian systems}
Let $(\T^*Q\times \R, \eta, H)$ be a contact Hamiltonian system. Recall that 
the action-dependent Hamilton--Jacobi equation for $(\T^*Q\times \R, \eta, H)$ is given by \cite{deLeon2021d}
\begin{equation}
   \d_Q \left(H\circ \gamma  \right) + \frac{\partial}{\partial z} (H\circ \gamma) \gamma = (H\circ \gamma) \liedv{\frac{\partial}{\partial z}} \gamma\, .
   \label{eq:HJ_Lainzian_autonomous}
\end{equation}
A \emph{complete solution of the action-dependent Hamilton--Jacobi problem} for $( \T^*Q\times \R, \eta, H)$ is a local diffeomorphism $\Phi\colon Q \times \R^n \times \R\to \cT Q \times \R$ such that, for each $\lambda \in \R^n$,
\begin{equation}
\begin{aligned}
    \Phi_\lambda \colon Q \times \R&\longrightarrow \cT Q \times \R  \\
    \left(q^i, z\right) &\longmapsto \Phi\left(q^i, \lambda, z\right) 
\end{aligned}    
\end{equation}
is a solution of the action-dependent Hamilton--Jacobi problem for $( \T^*Q\times \R, \eta, H)$.

Let $\Phi\colon Q \times \R^n \times \R\to \cT Q \times \R$ be a complete solution of the Hamilton--Jacobi problem for $(\T^*Q\times \R, \eta, H)$. Then,
\begin{equation}
    \mathcal{F} = \left\{\mathcal{F}_\lambda =\Ima \Phi_\lambda \mid \lambda\in \R^n \right\}
    \subseteq \cT Q\times \R
\end{equation}
is a foliation in coisotropic submanifolds.

In the symplectic case, since solutions of the Hamilton--Jacobi equation are closed one-forms on $Q$, the images of a complete solution for each choice of parameters $\lambda$ form a Lagrangian foliation invariant under the action of the Hamiltonian flow. This structure is called an integrable system. In analogy, we introduce the following definition:
\begin{dfn} \label{def:integrable_contact}
Let $(M, \eta, H)$ be a contact Hamiltonian system and let $\mathcal{F}$ be a foliation consisting of $(n+1)$-dimensional coisotropic (with respect to the Jacobi structure of the contact manifold) leaves invariant under the flow of the Hamiltonian vector field $X_H$. Then we call  $(M, \eta, H, \mathcal{F})$ an \emph{integrable system}.
\end{dfn}

\begin{rmrk}\label{remark_LA_2}
    Each of the leaves $\mathcal F_\lambda$ is invariant under the flow of $X_H$ and $X_{f_i}$.
    Since $\mathcal F_\lambda$ is an $(n+1)$-dimensional manifold with $n+1$ independent and commuting tangent vector fields, if the vector fields are complete, by \cite[Ch.\;10, Lem.\;2]{Arnold1978} it is diffeomorphic to $\mathbb{T}^k\times \R^{n+1-k}$, where $\mathbb{T}^k$ is the $k$-dimensional torus.
\end{rmrk}

The definition above can be compared to the ones given in~\cite{Boyer2011, Khesin2010}:
\begin{itemize}
    \item In~\cite{Boyer2011}, Boyer proposes a concept of completely integrable system for the so-called good Hamiltonians, that is, the Hamiltonian function is preserved along the flow of the Reeb vector field. This is a particular case of our definition, in which both the Hamiltonian and the constants of the motion do not depend on $z$.
    \item In~\cite{Khesin2010}, Khesin and Tabachnikov call a foliation \emph{co-Legendrian} when it is transverse to $\mathcal{H}$ and $\T\mathcal{F}\cap \mathcal{H}$ is integrable. Then they define an integrable system as a particular case of a co-Legendrian foliation with some extra regularity conditions. In the case that the dimension of the leaves is $n+1$, the following proposition shows that co-Legendrian foliations are particular cases of coisotropic foliations.
\end{itemize}

\begin{prop}    
    Let $i\colon N\hookrightarrow M$ be a submanifold of a $(2n+1)$-dimensional contact manifold $(M, \eta)$.  If $N$ is an $(n+1)$-dimensional co-Legendrian submanifold, then it is also a coisotropic submanifold.
\end{prop}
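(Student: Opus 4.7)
The plan is to introduce the auxiliary distribution $D := \T N \cap \mathcal{H}$, where $\mathcal{H} = \ker \eta$ is the contact distribution, show that $D$ is a Lagrangian subbundle of the fiberwise symplectic bundle $(\mathcal{H}, \d\eta|_\mathcal{H})$, and then deduce $\T N^\perp \subseteq \T N$ by checking that every $v \in \T N^\perp$ lies in $\mathcal{H}$ and is $\d\eta$-orthogonal to $D$, which forces $v \in D \subseteq \T N$.

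The first step is a dimension count: transversality of $N$ and $\mathcal{H}$, together with $\dim N = n+1$, $\dim \mathcal{H} = 2n$ and $\dim M = 2n+1$, yields $\dim D = n$, exactly half the rank of the contact symplectic bundle. The second step is to show that $D$ is isotropic. For local sections $X, Y$ of $D$ one has $\eta(X) = \eta(Y) = 0$, and by the co-Legendrian hypothesis $D = \T N \cap \mathcal{H}$ is integrable, so $[X, Y] \in D \subseteq \mathcal{H}$ and $\eta([X, Y]) = 0$. Cartan's formula $\d\eta(X, Y) = X(\eta(Y)) - Y(\eta(X)) - \eta([X, Y])$ then yields $\d\eta(X, Y) = 0$. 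Combined with the dimension count, $D$ is Lagrangian in $(\mathcal{H}, \d\eta|_\mathcal{H})$.

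To conclude, I would use two standard properties of the Jacobi morphism $\hat\Lambda$ of a contact manifold, based on the decomposition $\hat\Lambda(\alpha) = \sharp\alpha - \alpha(R)R$ (the pure-contact specialization of~\eqref{Lambda_hat_cocontact}, with Reeb field $R$). Applying $\flat(\sharp\alpha) = \alpha$ to $R$ and using $\eta(R) = 1$ and $\iota_R \d\eta = 0$ gives $\eta(\sharp\alpha) = \alpha(R)$, hence $\eta(\hat\Lambda(\alpha)) = 0$ for every $\alpha$; in particular $\T N^\perp \subseteq \mathcal{H}$. For $\alpha \in \T N^\circ$ and $w \in D$, the identity $\iota_{\sharp\alpha} \d\eta = \alpha - \alpha(R)\eta$ (obtained by rearranging the flat-morphism definition) combined with $\iota_R \d\eta = 0$ yields $\d\eta(\hat\Lambda(\alpha), w) = \alpha(w) - \alpha(R)\eta(w) = 0$, since $w \in \T N$ forces $\alpha(w) = 0$ and $w \in \mathcal{H}$ forces $\eta(w) = 0$. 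Hence every $v \in \T N^\perp$ lies in $\mathcal{H}$ and is $\d\eta$-orthogonal to the Lagrangian $D$, which forces $v \in D \subseteq \T N$. The main obstacle is the Lagrangian step: recognizing that the co-Legendrian hypothesis (integrability of $\T N \cap \mathcal{H}$ together with transversality) is precisely the condition needed to make $D$ a Lagrangian subbundle of the contact symplectic bundle; once this is secured the remaining manipulations of $\hat\Lambda$ are direct.
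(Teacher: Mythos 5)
Your proof is correct and follows essentially the same route as the paper's: both arguments identify $D=\T N\cap\mathcal H$ as the key object, show $\d\eta$ vanishes on it using integrability (the paper via $\Lie_v\contr{w}\eta$, you via Cartan's formula), use transversality for the dimension count making $D$ Lagrangian in $(\mathcal H,\d\eta|_{\mathcal H})$, and conclude through the fact that $\hat\Lambda$ takes values in $\mathcal H$ and restricts there to the symplectic sharp of $\d\eta$. If anything, your writeup makes explicit the final step (that $\T N^\perp\subseteq\mathcal H$ is $\d\eta$-orthogonal to $D$, hence contained in $D\subseteq\T N$), which the paper leaves implicit after establishing $\mathcal D_{\mathcal H}=\mathcal D_{\mathcal H}^\perp$.
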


\begin{proof}
    Let us write $\T N = \mathcal{D}_{\mathcal{H}} \oplus \mathcal{E}$, where $\mathcal{D}_{\mathcal{H}}=\T N \cap \mathcal{H}$. Then, $\T N^\perp =  \mathcal{D}_{\mathcal{H}}^\perp \cap \mathcal{E}^\perp$. Obviously, $\eta$ vanishes in $\T N \cap \mathcal{H}$. Moreover, since $\mathcal{D}_{\mathcal{H}}$ is integrable,  
    \begin{equation}
    0 = \eta([v, w]) = \contr{[v,w]} \eta 
    = \liedv{v} \contr{w} \eta - \contr{w} \liedv{v} \eta 
    = -\contr{w}  \contr{v} \d \eta - \contr{w} \d \contr{v} \eta
    =  -\contr{w}  \contr{v} \d \eta \, , 
\end{equation} 
for any $v, w \in \mathcal{D}_{\mathcal{H}}$, so $\d \eta_{\mid \mathcal{D}_{\mathcal{H}}}=0$. Observe that $\hat \Lambda_{\mid \mathcal{H}} = \sharp_{\mid \mathcal{H} }$, and $\sharp_{\mid \mathcal{H}}\colon \mathcal{H} \to \langle\Reeb\rangle^\circ $, $\sharp_{\mid \mathcal{H}}^{-1}(v)=\contr{v} \d \eta$ is an isomorphism. Since  $\d \eta_{\mid \mathcal{D}_{\mathcal{H}}}=0$, $\sharp_{\mid \mathcal{H}}^{-1}(\mathcal{D}_{\mathcal{H}}) \subseteq \mathcal{D}_{\mathcal{H}}^\circ$. Thus, $\mathcal{D}_{\mathcal{H}} \subseteq  \hat{\Lambda} (\mathcal{D}_{\mathcal{H}}^\circ)=\mathcal{D}_{\mathcal{H}}^\perp$. By a dimension counting argument, we can see that both spaces are equal and, thus, $\mathcal{D}_{\mathcal{H}} = \mathcal{D}_{\mathcal{H}}^\perp$.


\end{proof}

We also note that a foliation $\tilde{\mathcal{F}}$ by Legendrian submanifolds can never be invariant by the Hamiltonian flow. Indeed, let $\tilde{F} \in \tilde{\mathcal{F}}$. The leaves of $\tilde{\mathcal{F}}$ are Lagrangian, thus $\T\tilde{F}_0 \subseteq \ker \eta$. Since $\eta(X_H) = - H$, $X_H$ can only be tangent to the leaves at the zero set of $H$, hence its flow cannot leave invariant the whole foliation. 

Observe that Definition \ref{def:integrable_contact} can be naturally extended to cocontact Hamiltonian systems.

\begin{dfn}
Let $(M, \tau, \eta, H)$ be a cocontact Hamiltonian system and let $\mathcal{F}$ be a foliation consisting of $(n+2)$-dimensional coisotropic leaves (with respect to the Jacobi structure of the cocontact manifold) invariant under the flow of the cocontact Hamiltonian vector field $X_H$. Then we call  $(M,  \tau, \eta, H, \mathcal{F})$ an \emph{integrable cocontact system}.
\end{dfn}




\subsection{Example 1: freely falling particle with linear dissipation}

Consider a particle of time-dependent mass $m(t)$ which is freely falling and subject to a dissipation linear in the velocity with proportionality constant $\gamma$. The Hamiltonian function $H:\R\times \T^*\R \times \R\to \R$ is given by
\begin{equation}
    H(t, q,p,z) = \frac{p^2}{2m(t)} + m(t)g q + \frac{\gamma}{m(t)} z\,,
\end{equation}
where $g$ is the gravity. The Hamiltonian vector field corresponding to this Hamiltonian function is
$$ X_H = \parder{}{t} + \frac{p}{m(t)}\parder{}{q} - \left( m(t)g + \frac{\gamma}{m(t)} p \right)\parder{}{p} + \left(\frac{p^2}{2m(t)} - m(t)g q - \frac{\gamma}{m(t)} z\right)\parder{}{z}\,. $$
Its integral curves $(t(r), q(r), p(r), z(r))$ satisfy the system of differential equations
$$ \dot t = 1\,,\qquad \dot q = \frac{p}{m(t)}\,,\qquad \dot p = -m(t)g - \frac{\gamma}{m(t)}p\,,\qquad \dot z = \frac{p^2}{2m(t)} - m(t)g q - \frac{\gamma}{m(t)} z\,. $$
Combining the second and third equations, we get
$$ \frac{\d}{\d t}(m(t)\dot q) = -m(t) g - \gamma\dot q\,. $$

In order to solve the Hamilton--Jacobi problem, we look for a conserved quantity linearly independent from the Hamiltonian, i.e., a function $f$ on $\T\R\times \R$ such that $X_H f = 0$. For the sake of simplicity, one can assume that $f$ does not depend on $q$ or $z$.
Indeed, one can verify that
$$f(t,q,p,z)=e^{\int _1^t\frac{\gamma }{m(s)}\d s} \left(p + g e^{-\int _1^t\frac{\gamma }{m(s)}\d s} \int _1^te^{\int _1^u\frac{\gamma}{m(s)}\d s}  m(u)\d u\right)
$$ 
is a conserved quantity. We can thus express the momentum $p$ as a function of $t$ and a real parameter $\lambda$, namely, 
$$P(t,\lambda)=e^{-\int _1^t\frac{\gamma }{m(s)}\d s}\left(\lambda - ge^{-\int _1^t\frac{\gamma }{m(s)}\d s} \int _1^t e^{\int _1^u\frac{\gamma }{m(s)}\d s} m(u)\d u\right)\, ,$$
and obtain a complete solution of the Hamilton--Jacobi problem for $H$:
\begin{equation}
    \phi_\lambda:(t, q, z)\longmapsto \left(t, q,p \equiv e^{-\int _1^t\frac{\gamma }{m(s)}\d s}\left(\lambda - ge^{-\int _1^t\frac{\gamma }{m(s)}\d s} \int _1^t e^{\int _1^u\frac{\gamma }{m(s)}\d s} m(u)\d u\right), z \right)\,.
\end{equation}
In this case, equation \eqref{eq_coisotropic_coords} holds trivially, so $\Ima \Phi_\lambda$ is coisotropic.

In addition, one can verify that
$$ k(t,q,p,z)=p + ge^{-\int_1^t\frac{\gamma }{m(s)}\d s} \int_1^t e^{\int _1^u\frac{\gamma }{m(s)}\d s} m(u)\d u
=e^{-\int_1^t\frac{\gamma }{m(s)}\d s} f(t,q,p,z)$$
is a dissipated quantity, that is, $\left\{k, H\right\} - \Rt k = 0$.


\subsection{Example 2: damped forced harmonic oscillator}

Consider the product manifold $\R\times\cT \R\times\R$ with natural coordinates $(t, q, p, z)$. The Hamiltonian function
$$ H(t,q,p,z) = \frac{p^2}{2m} + \frac{k}{2}q^2 - qF(t) + \frac{\gamma}{m} z $$
describes a harmonic oscillator with elastic constant $k$, friction coefficient $\gamma$ and subjected to an external time-dependent force $F(t)$ \cite{deLeon2022d}.

The Hamiltonian vector field is
\begin{equation*}
    X_H = \parder{}{t} + \frac{p}{m}\parder{}{q} + \left( -kq + F(t) - \frac{p}{m}\gamma \right)\parder{}{p} + \left( \frac{p^2}{2m} - \frac{k}{2}q^2 + qF(t) - \frac{\gamma}{m} z \right)\parder{}{z}\,,
\end{equation*}
and its integral curves $(t(r),q(r),p(r),z(r))$ satisfy
\begin{equation*}
        \dot t = 1\,,\qquad \dot q = \frac{p}{m}\,,\qquad \dot p = -kq + F(t) - \frac{p}{m}\gamma\,,\qquad \dot z = \frac{p^2}{2m} - \frac{k}{2}q^2 + qF(t) - \frac{\gamma}{m} z\,.
\end{equation*}
Combining the second and the third equations above, we obtain the second-order differential equation
$$ m\ddot q + \gamma \dot q + kq = F(t) \,, $$
which corresponds to a damped forced harmonic oscillator. One can check that the function
\begin{multline}
    g(t,q,p,z) =  e^{\frac{\gamma  t}{2 m}} \left(\frac{\sinh \left(\frac{\kappa t}{2 m}\right) (2 k m q+\gamma  p)}{\kappa}+p \cosh \left(\frac{\kappa t}{2 m}\right)\right)\\
    -\int _1^t F(s) e^{\frac{\gamma  s}{2 m}} \left(\cosh \left(\frac{\kappa s}{2 m}\right)+\frac{\gamma  \sinh \left(\frac{\kappa s}{2 m}\right)}{\kappa}\right)\d s\, ,
\end{multline}
where $\kappa = \sqrt{\gamma^2 - 4km}$, is a conserved  quantity. 
It is worth noting that, since $\sinh = x +\mathcal{O}(x^3)$ and $\cosh x=1+\mathcal{O}(x^2)$ near $x=0$, $\sinh(ix)=i\sin x$ and $\cosh (ix)=\cos x$, the equation above is well-defined and real-valued for any of $\kappa\in \mathbb{C}$.
Thus, we can write $p$ in terms of $t,q,z$ and a real parameter $\lambda$ as
\begin{align*}
    P(t,q, \lambda, z)
    =\frac{e^{-\frac{\gamma  t}{2 m}} \left(\kappa \int _1^te^{\frac{s \gamma }{2 m}} F(s) \left(\cosh \left(\frac{\kappa s}{2 m}\right)+\frac{\gamma  \sinh \left(\frac{\kappa s}{2 m}\right)}{\kappa}\right)\d s-2 k m q e^{\frac{\gamma  t}{2 m}} \sinh \left(\frac{\kappa t}{2 m}\right)+\kappa \lambda \right)}{\gamma  \sinh \left(\frac{\kappa t}{2 m}\right)+\kappa \cosh \left(\frac{\kappa t}{2 m}\right)}\, ,
\end{align*}
and obtain a complete solution of the Hamilton--Jacobi problem:
\begin{equation}
    \Phi_\lambda\colon (t, q, \lambda, z) \mapsto \left(t, q, p\equiv P\left(t, q,\lambda, z\right), z \right)\, .
\end{equation}
Obviously equation \eqref{eq_coisotropic_coords} is satisfied, hence $\Ima \Phi_\lambda$ is coisotropic. In addition, 
\begin{multline*}
    f(t,q,p,z) = e^{-\frac{\gamma  t}{m}} \left[e^{\frac{\gamma  t}{2 m}} \left(\frac{\sinh \left(\frac{\kappa t}{2 m}\right) (2 k m q+\gamma  p)}{\kappa}+p \cosh \left(\frac{\kappa t}{2 m}\right)\right)
    \right.\\\quad\left.
    -\int _1^te^{\frac{s \gamma }{2 m}} F(s) \left(\cosh \left(\frac{\kappa s}{2 m}\right)+\frac{\gamma  \sinh \left(\frac{\kappa s}{2 m}\right)}{\kappa}\right)\d s\right]\,,
\end{multline*}
is a dissipated quantity.

\section{Conclusions and outlook} \label{section_conclusions}

The main contributions of the present paper are the following:

\begin{itemize}
    \item We have obtained two different Hamilton–Jacobi equations for time-dependent contact Hamiltonian systems: the so-called action independent and action-dependent approaches. In particular, the action-independent approach is useful for time-independent contact Hamiltonian systems, where the use of time as a free parameter allows to integrate the system at non-zero energy levels. In addition, we have introduced a notion of complete solution in the action-independent approach. 
    Each of these complete solutions is associated with a family of $n+1$ independent dissipated quantities in involution (where $n$ is the number of degrees of freedom of the system). 
    \item The action-dependent approach also permits to introduce a natural notion of complete solution to the Hamilton--Jacobi problem. Each of these complete solutions is associated with a family of $n$ independent dissipated quantities in involution. Moreover, the image of a complete solution is a coisotropic submanifold.
    \item We introduce a new notion of integrable system in a contact manifold, taking into account the dynamics given by the Hamiltonian vector field, and extending the concept of complete solution. This allows us to study the dynamics outside the zero-energy level.
\end{itemize}

As we have pointed out in Remarks~\ref{remark_LA} and \ref{remark_LA_2}, there is a relationship between solutions of Hamilton--Jacobi equations and several notions of integrability. Namely, the existence of foliations by coisotropic tori, integrability by quadratures and the construction of action-angle coordinates. Further research is needed to clarify these notions and their relationships in contact Hamiltonian systems.

Other topics for future research include the reduction problem, the Hamilton--Jacobi equations for the evolution vector field and its possible applications to thermodynamics as well as the extension to higher order systems. The study of the discrete Hamilton--Jacobi equations and applications to the construction of geometric integrators is also on the agenda.








\section*{Declaration of interests}
The authors have no conflicts to declare.

\section*{Acknowlegments}
We are thankful to Miguel C. Muñoz-Lecanda for his enriching comments on a previous version of the preprint. {We thank the referee for his/her helpful comments, which have been quite helpful in improving the clarity and overall quality of the paper.} M.~d.~L., M.~L.~and A.~L.-G.~acknowledge financial support of the Spanish Ministry of Science and Innovation (MCIN/AEI/ 10.13039/501100011033), under grants PID2019-106715GB-C2 and ``Severo Ochoa Programme for Centres of Excellence in R\&D'' (CEX2019-000904-S). M.~d.~L.~ also acknowledges Grant EIN2020-112197, funded by AEI/10.13039/501100011033 and European Union 
NextGenerationEU/PRTR. M.~L.~wishes to thank MCIN for the predoctoral contract PRE2018-083203. A.~L.-G. would also like to thank MCIN for the predoctoral contract PRE2020-093814. X.~R.~acknowledges financial support of the Ministerio de Ciencia, Innovación y Universidades (Spain), project PGC2018-098265-B-C33.


\let\emph\oldemph
\printbibliography

\end{document}